\newtheorem{Lemma}{Lemma}
\newtheorem{Remark}{Remark}
\newtheorem{theorem}{$\mathbf{Theorem}$}
\begin{document}
\title{On the Application  of Noisy Network Coding to the  Relay-Eavesdropper Channel }
\author{Peng Xu, Zhiguo Ding, \IEEEmembership{Member, IEEE}, Xuchu
Dai, and Kin Leung, \IEEEmembership{Fellow, IEEE}
\thanks{Peng Xu and Xuchu Dai are with Dept. of Electronic Engineering and Information Science
, University of Science and Technology of China, P.O.Box No.4,
230027, Hefei, Anhui, China. Zhiguo Ding is with  School of
Electrical, Electronic, and Computer Engineering Newcastle
University,  UK. Kin Leung is with Department of Electrical and
Electronic Engineering, Imperial College, London, UK.

 Submitted to IEEE Transaction on Information Theory at 14 March
 2012.}} \maketitle \vspace{-3em}

\vspace{1em}
\begin{abstract}
  In this paper,  we consider the design of a new secrecy transmission scheme    for a four-node relay-eavesdropper channel. The key idea of the   proposed scheme is to combine   noisy network coding   with the  interference assisted strategy for wiretap channel with a helping interferer. A new   achievable secrecy rate is characterized  for both discrete memoryless and  Gaussian channels. Such a new rate can be viewed as a general framework, where the existing interference assisted schemes such as noisy-forwarding  and cooperative jamming approaches can be shown as special cases of the proposed scheme. In addition, under some channel condition where
    the existing schemes can only achieve zero secrecy rate, the proposed secrecy scheme can still offer significant performance gains.

\end{abstract}
\begin{keywords}\centering
  Relay-eavesdropper channel, information theoretic secrecy, noisy network coding.
\end{keywords}
\section{introduction}
Cooperation and secrecy are two important concepts which have  been widely studied in wireless communications. Due to the broadcast nature of radio propagation, wireless transmissions can be over-heard by multiple unintended  receivers. Such broadcasting nature facilitates cooperation by allowing neighbor  users to intelligently exploit the over-heard information but also leads to a serious security problem such as eavesdropper attacking.

Wyner studied the eavesdropping attack from an information theoretic aspect by introducing the concept of wire-tap channel in \cite{Wyner1975}. Under the assumption that the wiretapper channel (source-to-wiretapper) is a degraded version of the main channel (source-to-destination), the secrecy capacity was established based on the rate-equivocation region concept. Csiz{\' a}r and K\"{o}rner extended this degraded channel to the general wiretapper channel setup without any special assumptions, and found the secrecy capacity in \cite{csisz¨¢r1978broadcast}. Recently, more types of multiuser networks have been studied in the context of secrecy communications. The multiple access  wiretapper channel (MAC-WT)
is considered in \cite{tang2007multiple,tekin2008general}, where  an external passive wiretapper was included. The multiple access channel with confidential messages (MAC-CM) is studied in   \cite{liu2006discrete,liang2008multiple}, where the source terminals act as eavesdroppers to each other. The works in \cite{liu2008discrete,liu2009secrecy} investigate the broadcast channel with confidential messages (BC-CM) where both receivers wish to keep their message secret from the others. Similarly, \cite{liu2008discrete,koyluoglu2011interference} considered the interference channel (IC), where the former treated each unintended receiver as an eavesdropper and the latter introduced an external eavesdropper.

 To further enhance the secrecy level in the above mentioned channel models, user cooperation has been considered in \cite{oohama2006relay,he2010cooperation,he2009two,ekrem2008effects,
 ekrem2011secrecy, tekin2008general,liu2008discrete,lai2008relay,
 tang2011interference,koyluoglu2011cooperative}, where \cite{oohama2006relay,he2010cooperation,he2009two,ekrem2008effects,
 ekrem2011secrecy} consider the untrusted helper scenario in which the relay node acts both a helper and an eavesdropper, and \cite{tekin2008general,liu2008discrete,lai2008relay,
 tang2011interference,koyluoglu2011cooperative} consider cooperative communication systems with an external eavesdropper. Particularly for the latter scenario with a dedicated relay, such as the work in \cite{tekin2008general,lai2008relay,tang2011interference} where the relay does not have its own message to be sent or received, so-called {\em interference-assisted} schemes that involve cooperative jamming \cite{tekin2008general} and noisy forward (NF) \cite{lai2008relay} techniques have  been proved to be particularly useful to increase secrecy. The basic idea of these strategies is to allow the relay to send codewords (or even pure noisy) which are {\em independent} to the source message in order to {\em confuse} the eavesdropper. The work in\cite{tang2011interference} can be viewed as a generalization of this type of interference strategies for a wiretap channel with a helping interferer (WT-HI). On the other hand, \cite{lai2008relay} extended some classical relaying schemes for relay channels (such as Cover and El Gamal's decode-and-forward (DF) and compress-and-forward (CF) schemes \cite{cover1979capacity}) to {\em strength} the main channel.   It has been well known that, in comparison with DF, CF with independent coding at the relay is more suitable for a general scenario without a strong source-relay link. However, for the CF scheme  in \cite{lai2008relay}, its lower bound on the equivocation rate is the same as the NF scheme. That is to say, the source  information forwarded from the relay to the destination was shown in \cite{lai2008relay} not helpful to this four-node relay-eavesdropper channel compared to interference-assisted schemes.

In this paper, we consider the four-node secrecy communication scenario with a source, a relay, a destination and an eavesdropper.
 The aim of this paper is to demonstrate that an effective use of the relay for source information forwarding can yield a larger achievable secrecy rate than the interference assisted schemes, whereas  the  existing relay secrecy protocol in \cite{lai2008relay} can only achieve the same performance as the interference assisted ones.   The key idea of our achievable scheme is to combine noisy network coding (NNC) \cite{lim2011noisy} for relay channels with the interference-assisted scheme \cite{tang2011interference} for WT-HI. The proposed scheme is mainly facilitated  by the fact that  NNC can be used as an efficient tool to analyze the achievable rates of a large scale network, which is particularly useful for the addressed four-node secrecy network. But slightly different to decoding in \cite{lim2011noisy}, which did not involve uniquely decoding the relay message,  we discuss  how to decode the relay messages at the eavesdropper when the equivocation rate is commutated.   As a result, we can obtain a larger achievable secrecy rate with more explicit expressions if compared to the traditional CF scheme in \cite{lai2008relay}. Furthermore, following stochastic encoding in \cite{Wyner1975,csisz¨¢r1978broadcast}, random dummy information has been blended into the encoding procedure  at the both   source and   relay, which ensures an effective application of NNC in secrecy communications.

After the achievable rate of the proposed scheme is obtained, it is shown that the interference-assisted scheme in \cite{tang2011interference} can be viewed as a special case of our proposed scheme when we ask the relay only to send information not related to the source message. When the relay-destination link is strong, we will show that the proposed scheme can  exploit the capability of the dedicated relay node more effectively. That is to say, in this case, the relay can strength the condition of the main channel and suppress the wiretapper channel at the same time. It is interesting to observe that  for some very strong eavesdropping cases, the proposed scheme can still achieve a positive secrecy rate, while the achievable secrecy rate of the scheme in \cite{tang2011interference} is zero.
 To further visualize the impact of our achievable cooperative scheme on secrecy, we extend the achievable secrecy rate to the   Gaussian relay-eavesdropper channel based on Gaussian codebooks. Fixed power control is first used to obtain an explicit expression of the achievable rate, and the impact of power control on the secrecy rate is investigated by using computer simulations. Particularly the provided numerical results demonstrate that the use of power control can yield more performance gains for  the proposed scheme compared to the  scheme in \cite{tang2011interference}. For example, in the case of very strong eavesdropping to which the scheme in \cite{tang2011interference} can only achieve zero secrecy rate, a positive secrecy rate can be achieved with a moderate requirement of the relay-destination channel condition.

 The reminder of the paper is organized as following. Section II describes the channel model of the addressed relay-eavesdropper scenario. Section III states the achievable rate of the proposed secrecy transmission protocol, and some remarks by comparing the obtained rate to existing ones. Section IV provides the extension of the secrecy rate to the Gaussian relay-eavesdropper channel, where the impact of the power control will also be discussed. Conclusions are given in Section V. Proofs are collected in Appendix. Throughout this paper, $(x)^+$ denotes $\max(0,x)$. A sequence of random variables with  time index $i\in[1:n]:=\{1,\cdots,n\}$ is denoted as $X^n:=\{X_1,\cdots,X_n\}$. And $X_{]i[}:=\{X_j,1\leq j<i \textrm{ or } i<j\leq n\}$.

\section{The relay-eavesdropper channel}
\begin{figure}[htbp]\centering
    \epsfig{file=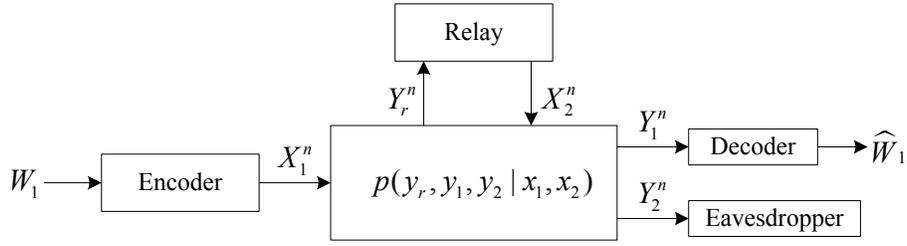, width=0.7\textwidth,clip=}
\caption{The relay eavesdropper  channel.}\label{system channel}
\end{figure}
 Consider a discrete memoryless relay-eavesdropper channel with a source $(X_1)$, a dedicated relay $(Y_r,X_2)$, a destination $(Y_1)$ and a passive eavesdropper $(Y_2)$. This communication model consists of two finite input alphabets $\mathcal{X}_1$, $\mathcal{X}_2$ at the source and relay respectively, three output alphabets $\mathcal{Y}_r$, $\mathcal{Y}_1$, $\mathcal{Y}_2$ at the relay, destination and eavesdropper respectively, and a channel transition probability distribution $p(y_r,y_1,y_2|x_1,x_2)$ where $x_t\in \mathcal{X}_t$, $y_t\in \mathcal{Y}_t$ $(t=1,2)$ and $y_r\in \mathcal{Y}_r$. The source wishes to send a confidential message $W_1$ to the destination with the help of the dedicated relay, while keeping it secret from the eavesdropper which knows the codebooks of the source and relay. We refer to such a cooperative communication model as the {\em relay-eavesdropper channel}, as shown in Fig. \ref{system channel}.

 The source intends to send a confidential  message $W_1\in \{1,\cdots,M\}$ to the destination in $n$ channel uses. The memoryless assumption is imposed  in the sense that at the $i$-th channel use  the channels outputs $(y_{r,i}, y_{1,i},y_{2,i})$ only depend on the channel inputs $(x_{1,i},x_{2 ,i})$. A stochastic encoder for the source is specified by a matrix of conditional probabilities $f_1(x_{1,i}|w_1)$, where $x_{1,i}\in \mathcal {X}_1$, $w_1 \in \mathcal{W}_1$ and $\sum_{x_{1,i}\in \mathcal{X}_1} f_1(x_{1,i}|w_1)=1$ for all $i=1,\cdots,n$. The encoder at the relay maps the signals $(y_{r,1},y_{r,2},\cdots,y_{r,i-1})$ received before the $i$-th channel use to its output $x_{2,i}$, using another stochastic encoder $f_2$, which is described by a matrix of conditional probabilities $f_2 (x_{2,i}|y^{i-1}_{r}, x_2^{i-1})$, where $x_{2,k}\in \mathcal{X}_2,(k=1,\cdots,i)$, $y_{r,k}\in \mathcal{Y}_r ,(k=1,\cdots,i-1)$ and $\sum_{x_{2,i}\in \mathcal{X}_2} f_2 (x_{2,i}|y^{i-1}_{r}, x_2^{i-1}) =1$. The decoding function at the destination is described by a deterministic mapping $\phi$: $\mathcal{Y}_1^n\rightarrow \mathcal{W}_1$. The average error probability of a $(M,n)$ code is \begin{eqnarray}
   P_e^{(n)} =\frac{1}{M} \sum_{w_1=1}^{M} Pr\{\phi(Y_1^n)\neq w_1 |w_1 \textrm{ was sent}\}.
 \end{eqnarray}

 The secrecy level at the eavesdropper is measured with respect to the equivocation rate $\frac{1}{n}H(W_1|Y_2^n)$. A  secrecy rate $R_s$ is said to be {\em achievable} for the relay-eavesdropper channel if for any $\epsilon>0$ there exists a sequence of codes $(M,n)$ such that \begin{eqnarray}
   M\geq 2^{nR_s}, P_e^{(n)}\leq \epsilon,\textrm{ and}\nonumber\\
   R_s-\epsilon\leq \frac{1}{n}H(W_1|Y_2^n)
 \end{eqnarray}
 for sufficiently large $n$.
\section{An achievable scheme}
In this section we will present an achievable scheme for the relay-eavesdropper channel, which combines the NNC scheme for relay channel \cite{lim2011noisy}, the random binning scheme for wiretap channels \cite{Wyner1975}, \cite{csisz¨¢r1978broadcast}, and the interference assisted scheme for WT-HI  \cite{tang2011interference}. As shown at the end of this section, the achievable rate obtained in   \cite{tang2011interference} can be viewed as a special case of the results obtained in this paper, by asking the relay to ignore the over-heard source information and only to transmit dummy messages.   The    equivocation rate achieved by the proposed secrecy scheme is given in the following subsection.
\subsection{Achievable secrecy rate}
Prior to the discussions  of the achievable rate,
we first give some definitions as:
\begin{align}&I_1=I(\hat{Y}_r;Y_r|X_2), \nonumber\\
&I_2^{(t)} =I(X_1,X_2;Y_t)+I\left(\hat{Y}_r;X_1,Y_t|X_2\right), \label{defenition}\end{align} where $(t=1,2)$, and  $R_1^{(t)}$ is defined as a function of  $R_2$ as
\begin{eqnarray}\label{R_1(R_2)} R_1^{(t)}(R_2) =\max\left\{ \min\left[I\left(X_1;\hat{Y}_r,Y_t|X_2\right), I_2^{(t)}-R_2 \right], I(X_1;Y_t) \right\} .\end{eqnarray}
Then our achievable secrecy rate is given by the following theorem.
\begin{theorem}
The achievable secrecy rate $R_s$ for the addressed relay-eavesdropper channel is
\begin{eqnarray}\label{simple}
  R_s=\max_{\pi,R_2\geq I_1} \left[R_1^{(1)}(R_2)-R_1^{(2)}(R_2)\right]^+,
\end{eqnarray}
 where $\pi$ denotes the class of distributions
$$p(x_1)p(x_2)p(y_r,y_1,y_2|x_1,x_2)p(\hat{y}_r|y_r,x_2) .$$
\end{theorem}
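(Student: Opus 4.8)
The plan is to prove \eqref{simple} by an achievability argument that splices Wyner's random binning \cite{Wyner1975,csisz��r1978broadcast} onto the $b$-block noisy network coding (NNC) scheme of \cite{lim2011noisy}, normalising all rates per channel symbol over the $N=nb$ transmitted symbols and letting $b\to\infty$ so that the block-initialisation overhead of NNC vanishes. Fix a distribution in the class $\pi$ and a relay rate $R_2\ge I_1$. At the source I generate $2^{N R_1^{(1)}(R_2)}$ codewords and group them into $2^{N R_s}$ bins, one per confidential message $w_1$, each bin containing $2^{N R_1^{(2)}(R_2)}$ codewords indexed by a dummy message $w_1'$; the source randomisation rate is thus pinned to $R_1^{(2)}(R_2)$, so that $R_s=R_1^{(1)}(R_2)-R_1^{(2)}(R_2)$. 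The relay runs NNC: in each block it quantises its observation $Y_r^n$ to $\hat Y_r^n$ --- feasible because $R_2\ge I_1=I(\hat Y_r;Y_r|X_2)$ from \eqref{defenition} --- and transmits the codeword $X_2^n$ carrying the rate-$R_2$ description, where the part of the index exceeding $I_1$ acts as the independent dummy interference that confuses the eavesdropper.

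For reliability at the destination I would invoke NNC simultaneous joint-typicality decoding across all $b$ blocks to recover $(w_1,w_1')$ at the combined rate $R_1^{(1)}(R_2)$. The two branches inside the minimum of \eqref{R_1(R_2)} come out of the usual NNC error events: mistaking $X_1$ alone contributes the constraint $I(X_1;\hat Y_r,Y_1|X_2)$, while simultaneously mistaking $X_1$ and the relay description contributes $I_2^{(1)}-R_2$; a short manipulation using the Markov chain $\hat Y_r-(Y_r,X_2)-(X_1,Y_1)$ shows $I_2^{(t)}-I_1=I(X_1,X_2;Y_t)-I(\hat Y_r;Y_r|X_1,X_2,Y_t)$, recovering the familiar single-relay NNC cut when $R_2=I_1$. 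The outer maximum with $I(X_1;Y_1)$ is realised by the competing decoder that treats $X_2^n$ as noise and decodes $X_1^n$ directly from $Y_1^n$; since the statistics are known, the designer simply selects whichever decoder is stronger. Exactly the same pair of decoders, applied to $Y_2^n$, shows that the eavesdropper can resolve the source codeword at rate $R_1^{(2)}(R_2)$.

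The crux is the equivocation bound $\tfrac1N H(W_1|Y_2^N)\ge R_s-\epsilon$. I would write $H(W_1|Y_2^N)=H(W_1,W_1'|Y_2^N)-H(W_1'|W_1,Y_2^N)$ and treat the two terms separately. Following the paper's twist --- which, unlike \cite{lim2011noisy}, has the eavesdropper \emph{uniquely} decode the relay's rate-$R_2$ description $W_2$ --- I pass to $H(W_1|Y_2^N,W_2)$, legitimate because supplying $W_2$ can only lower the equivocation. Conditioned on $W_2$ the eavesdropper reconstructs $\hat Y_r^N$ and $X_2^N$, so its effective observation of $X_1$ is governed by the conditional quantities in \eqref{R_1(R_2)} evaluated at $t=2$: the leakage obeys $I(X_1\text{-codeword};Y_2^N|W_2)\le N R_1^{(2)}(R_2)+N\epsilon$, while Fano's inequality makes $H(W_1'|W_1,W_2,Y_2^N)\le N\epsilon$ once the within-bin randomisation rate $R_1^{(2)}(R_2)$ matches the eavesdropper's resolution rate. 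Combining these collapses the difference to $N R_s$, and the $(\cdot)^+$ simply records that no positive secrecy is claimed when $R_1^{(1)}(R_2)\le R_1^{(2)}(R_2)$.

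I expect this third step to be the main obstacle. The delicate point is reconciling NNC's non-unique, block-coupled joint-typicality decoding with Wyner binning: I must establish that the eavesdropper can uniquely decode the relay description $W_2$ (not needed in \cite{lim2011noisy}) so that conditioning on $W_2$ and reconstructing $\hat Y_r^N$ is justified, and then show that the multi-letter leakage $I(X_1\text{-codeword};Y_2^N|W_2)$ single-letterises to $R_1^{(2)}(R_2)$ uniformly over both branches of the nested $\max/\min$ --- the NNC-aided branch $\min[I(X_1;\hat Y_r,Y_2|X_2),I_2^{(2)}-R_2]$ and the treat-as-noise branch $I(X_1;Y_2)$. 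Controlling the accumulation of the per-block $\epsilon$-terms as $b\to\infty$, and verifying that the randomisation rate can simultaneously be decoded at the eavesdropper while hiding $w_1$, is where the argument must be carried out most carefully.
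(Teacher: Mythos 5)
Your codebook architecture and reliability analysis match the paper's: the same NNC-plus-binning construction, the same two competing decoders at the destination giving the nested $\max/\min$ in \eqref{R_1(R_2)}, and the same identification of the dummy rate with the eavesdropper's resolution rate $R_1^{(2)}(R_2)$. The gap is in the equivocation step, and it is not merely a matter of care. You propose to pass to $H(W_1|Y_2^N,W_2)$ with $W_2$ the relay's full rate-$R_2$ description and to argue that the eavesdropper can \emph{uniquely decode} $W_2$. Both halves of this fail. First, handing the eavesdropper all of $(L_1,\dots,L_B)$ lets it reconstruct $X_2^{Bn}$ and $\hat Y_r^{Bn}$, after which the per-symbol leakage about the $X_1$-codeword is $I(X_1;\hat Y_r,Y_2|X_2)$, not $R_1^{(2)}(R_2)$; whenever $R_1^{(2)}(R_2)$ equals $I_2^{(2)}-R_2$ or equals $I(X_1;Y_2)$ (the cooperative-jamming regime, which is precisely the regime that makes the scheme interesting), this exceeds the within-bin randomization rate and the secrecy argument collapses. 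The relay's excess randomness $R_2-I_1$ is supposed to \emph{contribute} to the equivocation; conditioning it all away discards exactly that contribution. Second, unique decodability of $W_2$ from $Y_2^N$ is false in general — for large $R_2$ the eavesdropper provably cannot resolve the relay's indices, and that is the point of jamming.

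The paper avoids both problems by conditioning only on the single terminal index $L_B$ (whose rate cost $R_2/B$ vanishes as $B\to\infty$), keeping $H(X_2^{Bn},\hat Y_r^{Bn})\approx nBR_2$ as a positive confusion term, and then bounding the residual term $H(X_1^{Bn},X_2^{Bn},\hat Y_r^{Bn}|W_1,Y_2^{Bn},L_B)$ by Fano applied to a \emph{genie-aided} eavesdropper that is given $(w_1,l_B)$ and jointly decodes $(w_1',l_1,\dots,l_{B-1})$; in the regime $R_2> I(X_2;Y_2|X_1)+I(\hat Y_r;X_1,Y_2|X_2)$ the relay indices are additionally binned so that only a sub-index of rate $R_2''=I(X_2;Y_2|X_1)+I(\hat Y_r;X_1,Y_2|X_2)-\epsilon_2$ need be resolved, with the bin indices $(L')^{B-1}$ supplied as further side information. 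Your proposal has no counterpart to this two-case treatment of $R_2$. You are also missing the ingredient corresponding to the paper's Lemma 1: because the compression index $L_j$ is selected by typicality covering rather than drawn uniformly, one must prove $p(l_j|l_{j-1})\lesssim 2^{-n(R_2-2\delta(\epsilon'))}$ (via a Chernoff bound on the number of qualified covering indices) in order to certify $H(X_2^{Bn},\hat Y_r^{Bn})\geq nBR_2-nB\delta$. Without that near-uniformity lemma the claim that the relay injects $R_2$ bits per block of genuine randomness is unsubstantiated. To repair your argument you would need to replace the conditioning on $W_2$ by conditioning on $L_B$ alone (plus the relay bin indices in the large-$R_2$ case), restore the $+H(X_2^{Bn},\hat Y_r^{Bn})$ term, and prove the uniformity lemma.
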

\begin{proof}
  refer to Appendix A.
\end{proof}

To achieve the secrecy rate given in Theorem 1, the source and the relay  cooperate with each other based on the idea of the NNC scheme, and the necessary \emph{randomness} will be blended into the codebooks at both   two transmitters in order to increase the secrecy level. Roughly speaking, a message is first generated at the source by mixing the confidential message   with a random dummy message, where the redundant dummy part provides  randomization in order to ensure the confidential part transmitted under the secrecy constraint. Then the source encodes this mixed message and sends it multiple times in multiple blocks using independent codebooks, which is different to the classical CF scheme in which  a source message is first divided  into multiple segments and then   transmitted over multiple blocks. Meanwhile, at each block, the relay uses the NNC strategy without Wyner-Ziv binning to send the quantized observations obtained from the previous block. For many existing works, such as \cite{lai2008relay}, the relay observations are first compressed and then random binning is used to insert randomness   into the system, where the two steps are performed over two separate stages. A key idea of the proposed secrecy scheme is to treat the size of the relay codebook as a parameter, where the dummy message is generated at the same time when the source message is compressed at the relay, and the amount of randomness injected into the system can be easily adjusted by changing this parameter.

For decoding, the destination performs either  special ``joint" decoding according to NNC or  simple separated decoding by ignoring the relay message. Specifically, when such  ``joint" decoding is performed, the destination utilizes the observations from the both source and relay, and simultaneously decodes the received signals from all the blocks without decoding the compression indices. Separate decoding simply aims to decode the source message by treating the signals transmitted by the relay as pure noise. Some remarks of the achievable scheme are given as follows.
\begin{Remark}
  In Theorem 1,  the first factor of $R_1^{(1)}(R_2)$, $\min\left[I\left(X_1;\hat{Y}_r,Y_1|X_2\right), I_2^{(1)}-R_2 \right]$,  can be viewed as a special case of the NNC achievable rate  for the classical relay channel without security constraints \cite{lim2011noisy}; $R_1^{(2)}(R_2)$ denotes the  redundancy rate sacrificed at the source, in order to protect the confidential message and confuse the eavesdropper; $R_2$ denotes the data rate at the relay. Furthermore, for the rate $R_1^{(1)}(R_2)$ given in \eqref{R_1(R_2)}, the first term in the $\max$ function, $\min\left[I\left(X_1;\hat{Y}_r,Y_1|X_2\right), I_2^{(1)}-R_2 \right]$,   achieved by using the ``joint" decoding strategy according to NNC; on the other hand, the second term, $I(X_1; Y_1)$, is achieved by using the separate decoding strategy.
\end{Remark}
\begin{Remark}
  The separate decoding strategy  at the destination corresponds to the \emph{cooperative jamming} scheme in which the relay rate $R_2$ is so large that  both the destination and eavesdropper cannot extract any source information from the relay transmissions.   This cooperative jamming is applied to the case that the relay-eavesdropper link is stronger than the relay-destination link, then the relay can  hurts the eavesdropper more even though it attacks the destination. Note that this is different to the work in \cite{lim2011noisy} for classical relay channel without secrecy constraint, which ignores the separate decoding and  only needs to simply set $R_2$ to larger than but arbitrarily close to $I_1$.
\end{Remark}
\begin{Remark}
  As shown in the next subsection, the proposed relay scheme can achieve better performance than the interference assisted scheme in \cite{tang2011interference} which in turn has been recognized to realize  better performance than the relay scheme \cite{lai2008relay}. Direct comparison between the two relay schemes is difficult due to the complicated expressions of two achievable  rates. The reason for such a performance improvement is due to the use of NNC, where NNC
   is ideal for the analysis of large scale networks and
   therefore can facilitate the calculation of a better equivocation rate.
\end{Remark}
\subsection{Special cases}
In this subsection, we will present several special cases to show   the superiority of the proposed achievable scheme in comparison with the    interference assisted schemes in \cite{lai2008relay, tang2011interference}.
\subsubsection{Disable the channel output $Y_r$}
If we disable the received signal $Y_r$ at the relay by setting $\hat{Y}_r= {\O}$, the channel model reduces to WT-HI in \cite{tang2011interference}, and the achievable rate becomes
\begin{eqnarray}
  R_{s,[HI]}=\max_{p(x_1)p(x_2),R_2\geq 0} \left\{ R_{1,[HI]}^{(1)}(R_2)-R_{1,[HI]}^{(2)}(R_2) \right\} \nonumber
\end{eqnarray}
where $R_{1,[HI]}^{(t)}(R_2)= \max \{\min[I(X_1;Y_t|X_2),I(X_1X_2;Y_t)-R_2], I(X_1;Y_t)$ with $t=1,2$. So   the achievable equivocation of the proposed scheme would be at least as good as the one in \cite{tang2011interference} in any case.
\subsubsection{Very strong eavesdropping}
Following  \cite{tang2011interference}, we define this case as
\begin{eqnarray}
  I(X_1;Y_2)\geq I(X_1;Y_1|X_2) \nonumber
\end{eqnarray} for all distributions of $p(x_1)p(x_2)p(\hat{y}_r|y_r,x_2)$. For such very strong eavesdropping as shown in Fig. \ref{very_strong}, the helping interferer scheme in \cite{tang2011interference} cannot achieve a positive secrecy rate as given in Eq. (14) of \cite{tang2011interference}.  However, the secrecy rate of the proposed scheme can be lower bounded as  \begin{align}\label{very-strong}R_{s, lb} =  \max_{\pi} \left\{ \min \left[ \begin{array}{l}
I(X_1; \hat{Y}_r,Y_1|X_2) -I(X_1;Y_2),\\
  I_2^{(1)}-I_2^{(2)}, \\
  I(X_1,X_2;Y_1)-I(\hat{Y}_r;Y_r|X_1,X_2,Y_1)-I(X_1;Y_2)
\end{array}\right]\right\}^+\end{align}
where the lower bound on the right side is obtained by choosing $R_2=R_2^*$ with $R_2^*=\max\{I_1,I(X_2;Y_2|X_1)+I(\hat{Y}_r;X_1,Y_2|X_2)\}$.
From Fig. \ref{very_strong}, one can see that a non-zero secrecy rate $R_s$ can still  be achievable in this case. This is because the use of the relay can strengthen  the main link and suppress the wiretapper link at the same time.
\subsubsection{Extremely strong eavesdropping}
This case is referred to $$I(X_1;Y_2) \geq \min \left\{ I(X_1;\hat{Y}_r,Y_1|X_2), I(X_1,X_2;Y_1)-I(\hat{Y}_r;Y_r|X_1,X_2,Y_1) \right\}$$ for all product distributions of $p(x_1)p(x_2)p(\hat{y}_r|y_r,x_2)$, where the term at the right side of the above equality is from \eqref{R_1(R_2)} by setting $R_2=I_1$. The proposed scheme cannot achieve any positive secrecy rate, the same as the scheme in \cite{tang2011interference}.

Note that the use of the channel prefixing technique in \cite{csisz¨¢r1978broadcast} may further enhance the performance of the proposed scheme, but we do not consider this prefixing method in this paper due to the intractable evaluation of its performance.
\begin{figure}[htb]
\begin{center}
\subfigure[Case $i$: when $I_1<I_3$] {\includegraphics[width=0.4\textwidth]{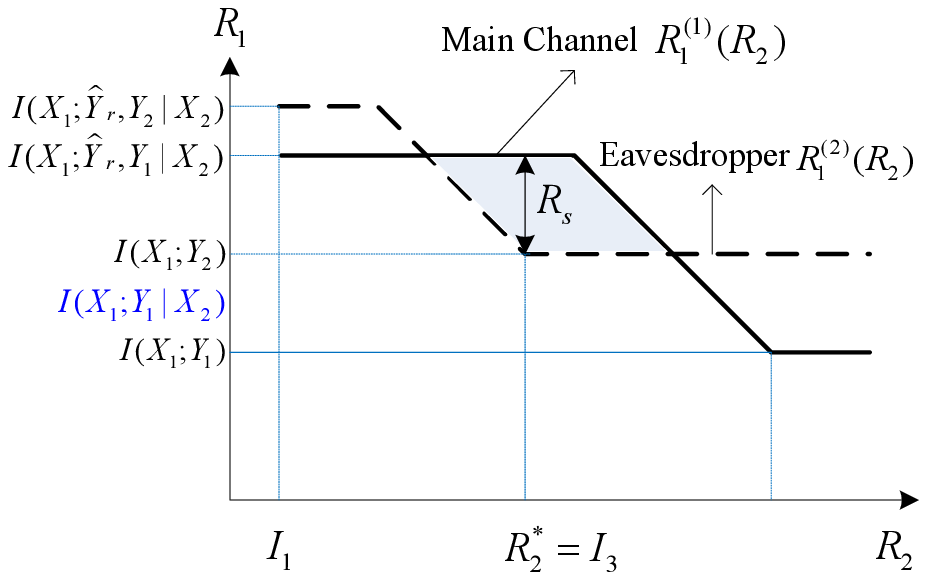}}
\subfigure[Case $i$: when $I_1\geq I_3$] { \includegraphics[width=0.4\textwidth]{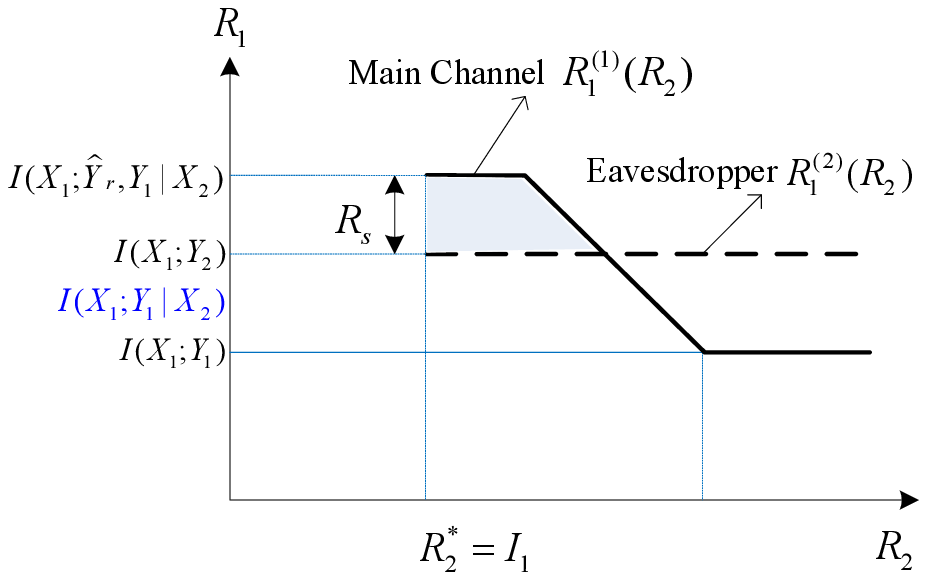}}
\subfigure[Case $ii$ (corresponds to $I_1<I_3$)] {\includegraphics[width=0.4\textwidth]{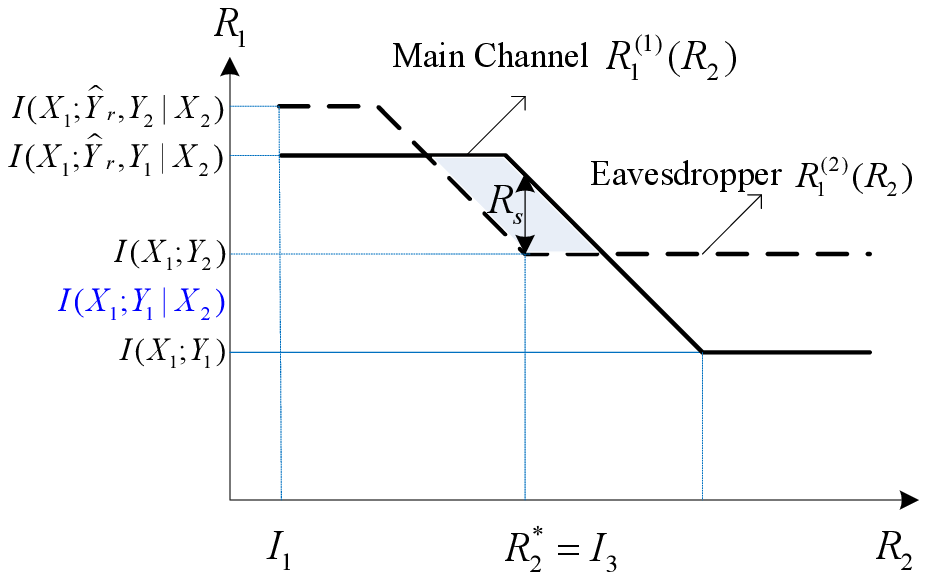}}
\subfigure[Case $iii$ (corresponds to $I_1\geq I_3$)] {\includegraphics[width=0.4\textwidth]{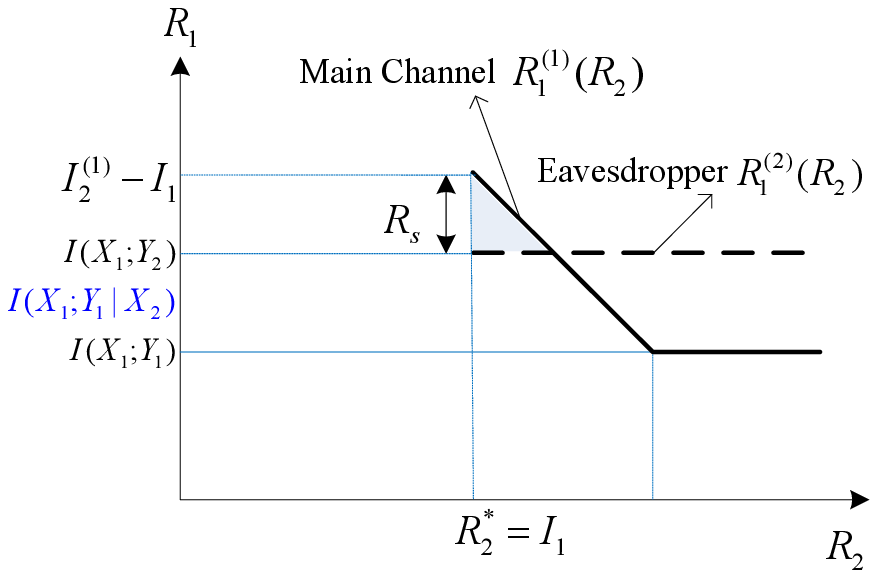}}
\end{center}
  \caption{Illustrations of the functions  $R_1^{(1)}(R_2)$ and $R_1^{(2)}(R_2)$ versus relay rate $R_2(\geq I_1)$ for the very strong eavesdropping case (i.e. $I(X_1;Y_1|X_2)\leq I(X_1;Y_2)$), where $I_3=I(X_2;Y_2|X_1)+I(\hat{Y}_r;X_1,Y_2|X_2)$, $R_2^*=\max\{I_1, I_3\}$. Case $i$, $ii$, $iii$ denotes $R_{s,lb}=I(X_1; \hat{Y}_r,Y_1|X_2) -I(X_1;Y_2)$, $ I_2^{(1)}-I_2^{(2)}$, $I(X_1,X_2;Y_1)-I(\hat{Y}_r;Y_r|X_1,X_2,Y_1)-I(X_1;Y_2)$ in \eqref{very-strong}, respectively. }\label{very_strong}
\end{figure}

\section{Gaussian relay-eavesdropper channel}
In this section, a discrete memoryless Gaussian relay-eavesdropper channel is considered, where the channel outputs at the  destination, eavesdropper, relay can be  expressed as \cite{tang2011interference}
\begin{eqnarray}
  Y_1&=&X_1+\sqrt{b}X_2+Z_1, \nonumber\\
   Y_2&=&\sqrt{a}X_1+X_2+Z_2, \nonumber\\
   Y_r&=&\sqrt{c}X_1+Z_r, \label{gaussian channel}
\end{eqnarray}
where $Z_1,$ $Z_2,$ $Z_r$ are i.i.d. zero-mean Gaussian random variables with unit variance. $a,$ $b,$ $c$ are channels gains. The transmit power of the channel inputs $X_1$ and $X_2$ are constrained by
\begin{eqnarray}
  \frac{1}{n}\sum_{i=1}^n E[X_{1,i}^2] \leq \bar{P}_1,
    \frac{1}{n}\sum_{i=1}^n E[X_{2,i}^2] \leq \bar{P}_2.
\end{eqnarray}
\subsection{Achievable secrecy rate}
Applying Theorem 1 to the Gaussian case given by  \eqref{gaussian channel}, the following theorem can be obtained.
\begin{theorem}\label{thoorem2}
For a Gaussian relay-eavesdropper channel, when fixing the transmit powers at the source and relay as $0\leq P_1 \leq \bar{P}_1$ and $0\leq P_2 \leq \bar{P}_1$, the following secrecy rate is achievable
\begin{eqnarray}
  R_s(P_1,P_2)=\max\{R_s^I(P_1,P_2),R_s^{II}(P_1)\}
  \end{eqnarray}
  where $R_s^I(P_1,P_2)$ is
  \begin{eqnarray}
    R_s^I(P_1,P_2)=\left\{ \begin{array}{ll}
      C\left(P_1+\frac{bcP_1P_2}{1+(1+c)P_1+bP_2}\right)- C\left(\frac{aP_1}{1+P_2}\right),&\textrm{ if }b\geq1+(1+c)P_1\\
      C\left(P_1+bP_2\right)-C\left(aP_1+P_2\right),& \textrm{ if } 1\leq b<1+(1+c)P_1\\
      C\left(\frac{P_1}{1+bP_2}\right)-C\left(\frac{aP_1}{1+P_2} \right), & \textrm{ if } b<1
    \end{array}\right.
  \end{eqnarray}
  and $R_s^{II}(P_1)=[C(P_1)-C(aP_1)]^+$ with $C(x)=\frac{1} {2} \log(1+x)$.
\end{theorem}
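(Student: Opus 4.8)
The plan is to specialize Theorem~1 to the Gaussian channel \eqref{gaussian channel} by taking independent Gaussian inputs $X_1\sim\mathcal{N}(0,P_1)$ and $X_2\sim\mathcal{N}(0,P_2)$ together with the Gaussian test channel $\hat{Y}_r=Y_r+\hat{Z}=\sqrt{c}\,X_1+Z_r+\hat{Z}$, where $\hat{Z}\sim\mathcal{N}(0,N)$ is independent of everything else and the compression-noise variance $N\ge 0$ is kept as a free design parameter. With this choice every information quantity in Theorem~1 collapses to a $C(\cdot)$ expression by routine scalar and single-input multiple-output Gaussian computations. First I would record $I_1=C\!\left(\frac{cP_1+1}{N}\right)$, the joint bounds $I(X_1;\hat{Y}_r,Y_1|X_2)=C\!\left(P_1+\frac{cP_1}{1+N}\right)$ and $I(X_1;\hat{Y}_r,Y_2|X_2)=C\!\left(aP_1+\frac{cP_1}{1+N}\right)$, the single-user floors $I(X_1;Y_1)=C\!\left(\frac{P_1}{1+bP_2}\right)$ and $I(X_1;Y_2)=C\!\left(\frac{aP_1}{1+P_2}\right)$, and $I_2^{(1)}=C(P_1+bP_2)+C\!\left(\frac{cP_1}{1+N}\right)$, $I_2^{(2)}=C(aP_1+P_2)+C\!\left(\frac{cP_1}{1+N}\right)$. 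A small but crucial identity to establish up front is $I_2^{(t)}-I_1=I(X_1,X_2;Y_t)-C\!\left(\frac1N\right)$, obtained because the two compression terms telescope; this turns the joint bound at $R_2=I_1$ into its familiar form $\min\!\big[\,C(P_1+\tfrac{cP_1}{1+N}),\,C(P_1+bP_2)-C(\tfrac1N)\,\big]$ at the destination.

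After substitution the secrecy rate of Theorem~1 becomes the two-parameter program
\begin{equation}
R_s=\max_{N\ge0}\ \max_{R_2\ge I_1}\ \left[R_1^{(1)}(R_2)-R_1^{(2)}(R_2)\right]^{+}.
\end{equation}
The inner maximization over $R_2$ is the tractable step. Each $R_1^{(t)}$ is continuous, piecewise linear and non-increasing in $R_2$, being the maximum of a constant single-user floor and a clipped affine ``joint'' bound $I_2^{(t)}-R_2$; and on the interval where both $R_1^{(1)}$ and $R_1^{(2)}$ lie on their affine pieces the difference is the $R_2$- and $N$-independent constant $I_2^{(1)}-I_2^{(2)}=C(P_1+bP_2)-C(aP_1+P_2)$, because the common term $C(\tfrac{cP_1}{1+N})$ cancels. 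Tracing the difference as $R_2$ grows from $I_1$ therefore leaves, for fixed $N$, only three candidate peaks: (a) at $R_2=I_1$ with the eavesdropper floored, giving $\min[\,C(P_1+\tfrac{cP_1}{1+N}),\,C(P_1+bP_2)-C(\tfrac1N)\,]-C(\tfrac{aP_1}{1+P_2})$; (b) the ``sum-rate'' plateau $C(P_1+bP_2)-C(aP_1+P_2)$; and (c) both receivers floored, giving $C(\tfrac{P_1}{1+bP_2})-C(\tfrac{aP_1}{1+P_2})$. To these one adds the silent-relay fallback obtained by letting the relay carry no power, which contributes the plain wiretap rate $R_s^{II}(P_1)=[C(P_1)-C(aP_1)]^{+}$ and is the reason for the outer $\max\{R_s^I,R_s^{II}\}$.

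The outer maximization over $N$ and the selection of the winning candidate is where the three-way split on $b$ emerges, and it is the main obstacle. The pivot is that the eavesdropper's joint bound $I_2^{(2)}-I_1=C(aP_1+P_2)-C(\tfrac1N)$ falls below its single-user floor $C(\tfrac{aP_1}{1+P_2})$ exactly when $N\le \tfrac1{P_2}$, so the eavesdropper can be floored (candidate~(a)) only in that range. Within candidate~(a) the destination rate $\min[\,C(P_1+\tfrac{cP_1}{1+N}),\,C(P_1+bP_2)-C(\tfrac1N)\,]$ is maximized over $N$ at the balancing value $N^{*}=\frac{1+(1+c)P_1}{bP_2}$, where the two bounds coincide and the rate equals $C\!\left(P_1+\frac{bcP_1P_2}{1+(1+c)P_1+bP_2}\right)$; and $N^{*}\le \tfrac1{P_2}$ holds precisely when $b\ge 1+(1+c)P_1$, which is the first case. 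When $1\le b<1+(1+c)P_1$ the unconstrained optimizer $N^{*}$ is infeasible for flooring the eavesdropper, the best feasible choice slides to $N=\tfrac1{P_2}$, and candidates (a) and (b) merge to the plateau $C(P_1+bP_2)-C(aP_1+P_2)$; for $b<1$ one shows the floors dominate and candidate~(c) $C(\tfrac{P_1}{1+bP_2})-C(\tfrac{aP_1}{1+P_2})$ wins, corresponding to cooperative jamming. I would close by verifying continuity at the two thresholds $b=1+(1+c)P_1$ (where $N^{*}=\tfrac1{P_2}$) and $b=1$, at each of which the adjacent expressions agree after a cancelling $C(P_2)$ term, and then applying $[\cdot]^{+}$ and the maximum with $R_s^{II}$.

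The genuinely delicate part is not any individual Gaussian evaluation but the bookkeeping across the whole $(R_2,N)$ plane: one must confirm which argument of each nested $\max/\min$ is active in every parameter regime, that no intermediate operating point beats the three candidates, and that $N^{*}$ is simultaneously feasible and optimal in the strong-$b$ regime. Establishing that these active sets are exactly the ones above---so that the optimized rate telescopes into the three clean $C(\cdot)-C(\cdot)$ differences---is the crux of the argument.
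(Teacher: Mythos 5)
Your proposal is correct and follows essentially the same route as the paper: specialize Theorem~1 to Gaussian inputs with a Gaussian test channel, and the balancing compression-noise variance you derive, $N^{*}=\frac{1+(1+c)P_1}{bP_2}$, together with the threshold $N\le 1/P_2$ for flooring the eavesdropper at $R_2=I_1$, is exactly the paper's choice $\delta_C^{*}$ and $R_2^{*}=\max\{I_1,\,C(P_2)+C(\tfrac{cP_1}{1+\delta_C^{*}})\}$, yielding the same three cases in $b$. The only real difference is that what you flag as the ``crux'' --- verifying that no other operating point in the $(R_2,N)$ plane beats your three candidates --- is unnecessary, since Theorem~2 is purely an achievability claim; the paper simply exhibits the specific $(\delta_C^{*},R_2^{*})$ and evaluates, which already completes the proof.
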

\begin{proof}
  refer to Appendix C.
\end{proof}
\begin{Remark}\label{remark power}
  When $b\leq 1+P_1$, the achievable secrecy tare in Theorem 2 is the same as the one in Theorem 3 of \cite{tang2011interference}. However, when $b>1+P_1$, it can be easily proved that the proposed scheme strictly outperforms the latter one. This is mainly because, when  the relay-destination channel is sufficiently strong,  the  main channel becomes a bottleneck for the interference assisted scheme without relaying the source message, whereas the proposed scheme can efficiently improve  the main link by relaying the quantized source message.
\end{Remark}
\begin{Remark}
  For the Gaussian relay-eavesdropper channel, power control can be used to further enhance the achievable secrecy rate in Theorem 2. Under certain conditions, the secrecy rate is not necessarily maximized at $P_1=\bar{P}_1$ and $P_2=\bar{P}_2$. Loosely speaking, in some situations, larger  source power $P_1$ and relay power $P_2$ may enhance the  eavesdropper's decoding capability  more, or  bring more interferences   to the destination. Or in other words,    the maximum power transmission does not always result in  a larger equivocation rate.  Then power control can play an important role for secrecy transmission and change the achievable secrecy rate in Theorem 2 into $$R_s=\max_{0\leq P_1\leq \bar{P}_1,0\leq P_2\leq \bar{P}_2} R_s^I(P_1,P_2),$$ where $R_s^{II}$ is redundant and can be ignored when power control  is applied since it can be viewed as a special case of $R_s^I$ by setting $P_2=0$. Following those steps in Appendix D of \cite{tang2011interference}, the optimal solution of power control can be obtained. Due to the space limit, we will only rely on the computer simulations in this paper to show the performance of the proposed scheme with power control as shown in the  next numerical sections.
\end{Remark}

\subsection{Numerical results}
In Fig.  \ref{compare2}, the achievable secrecy rates for the proposed scheme are shown as a function of the relay-destination channel gain $b$. We set the source-eavesdropper channel gain as $a=1$ or $a=6$ respectively, the capacity of the eavesdropper channel without the relay is zero. The interference assisted scheme in \cite{tang2011interference} and the upper bound obtained in Theorem 1 of \cite{lai2008relay} have also been shown in the figures for comparison.   Interestingly, one can see that the performance of the proposed scheme is the same as the one in \cite{tang2011interference} for $b\leq 1$ and outperforms it when $b>1$. Such a performance gain can be further enlarged    for a large $a$ as $a=6$, which is the very strong case as discussed above. It means that the proposed scheme obtains more helps from the power control compared to the scheme in \cite{tang2011interference}, where the use of power control makes it more likely that the proposed scheme outperforms the comparable one in \cite{tang2011interference}, even with moderate channel conditions. As can be seen from  the Theorem  \ref{thoorem2}, the condition to ensure that the proposed scheme achieves a better performance than the comparable one  is $b>1+P_1$. By applying power control, the value of the source transmission power $P_1$ is not necessarily  to be the maximum $\bar{P}_1$, which brings the benefit that the requirement to the relay-destination channel gain $b$ is reduced.
\begin{figure}[htbp]
\begin{center}
   \subfigure[When $a=1$]{\includegraphics[width=0.45\textwidth]{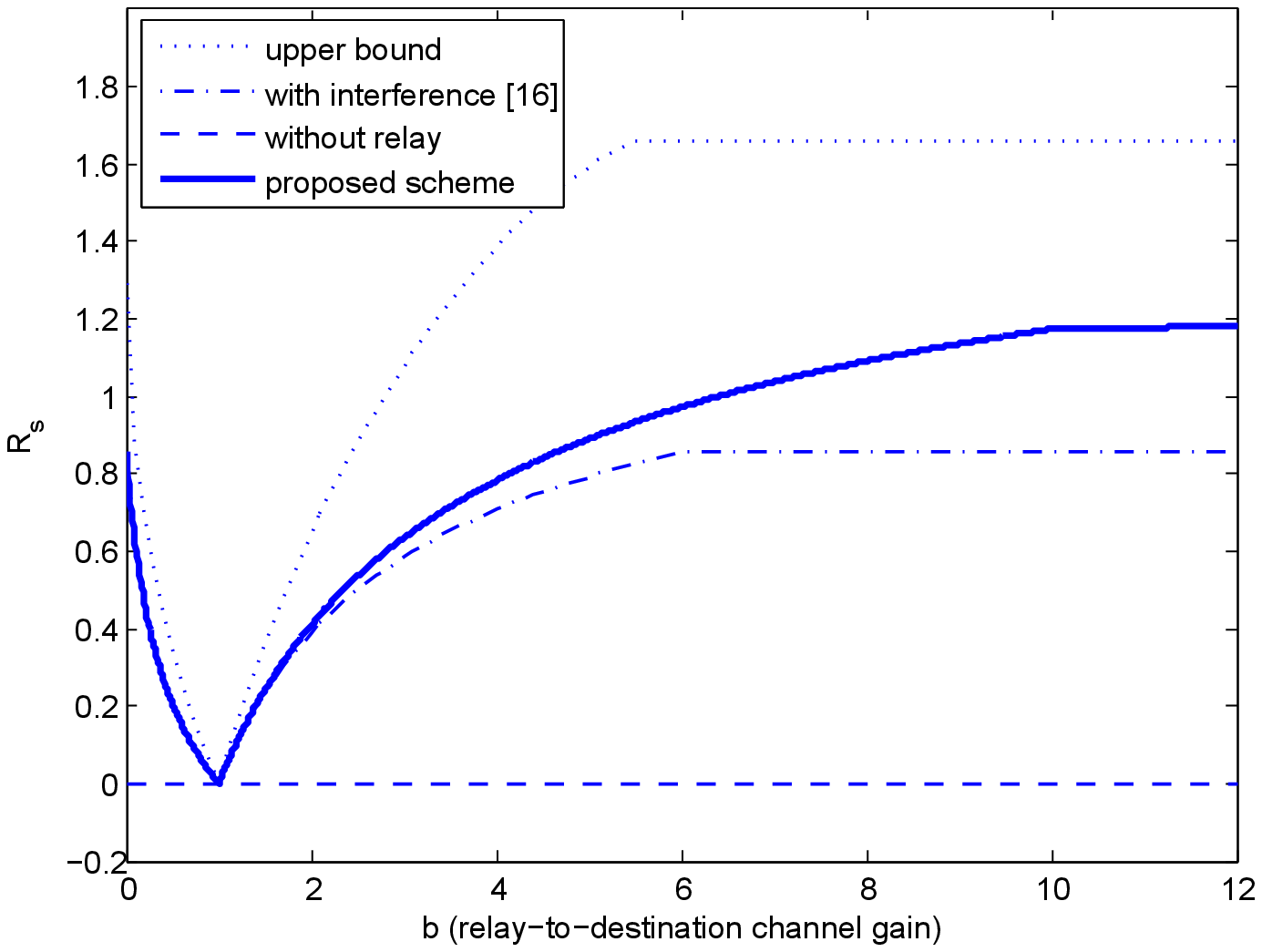}}
   \subfigure[When $a=6$]{\includegraphics[width=0.45\textwidth]{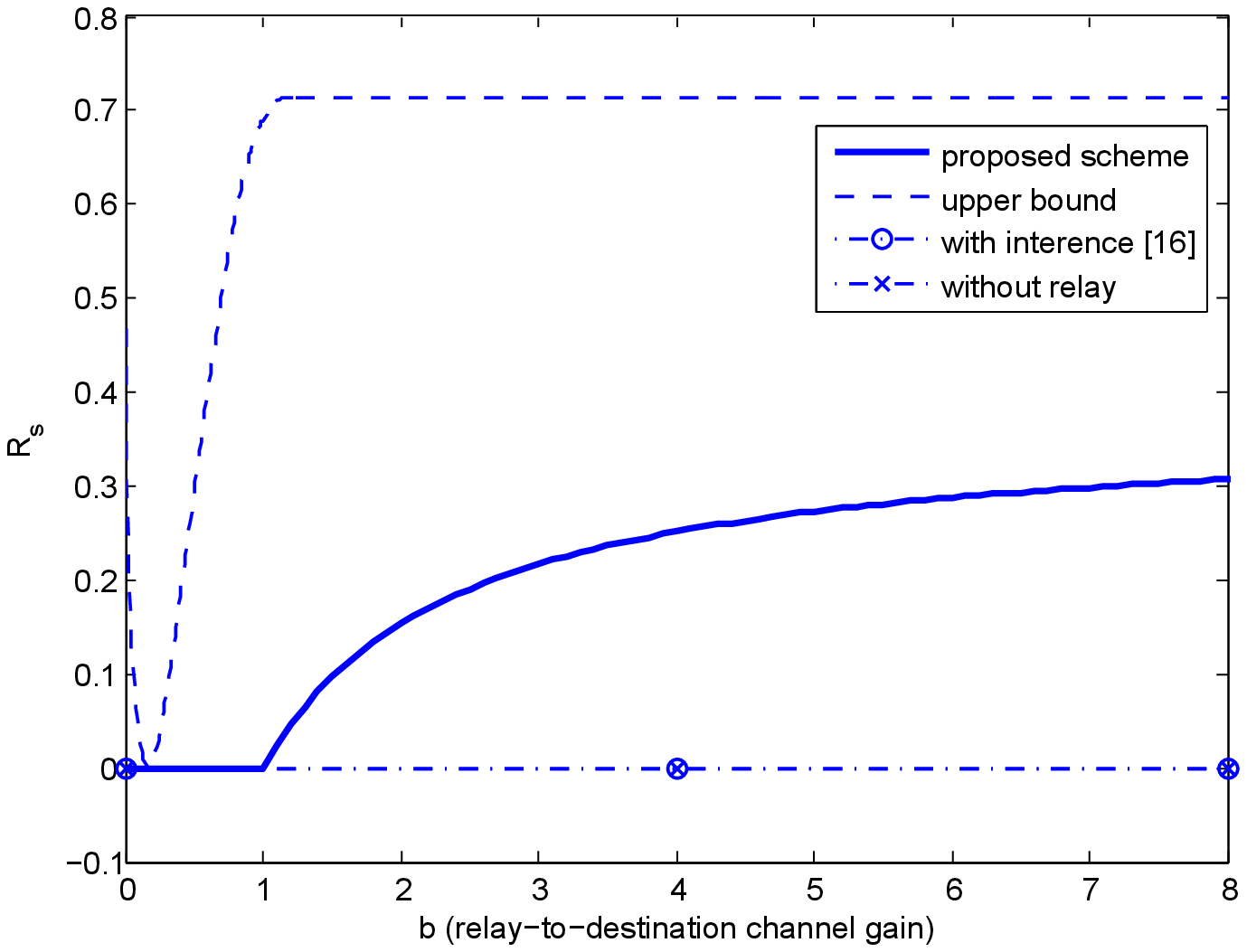}}
\end{center}
\caption{Achievable secrecy rate for different schemes for a relay channel versus $b$ (the relay destination channel gain), where  $c=0.8$ and the power bounds are $\bar{P}_1 =\bar{P}_2=5$.}\label{compare2}
\end{figure}
\section{Conclusion}
  In this paper, we have focused on the four-node relay-eavesdropper channel, and developed a new form of the achievable secrecy rate for such a scenario. The key idea of the   proposed scheme is to combine   noisy network coding   with the  interference assisted strategy for wiretap channel with a helping interferer. Facilitated by these techniques, the  secrecy rate achieved by the proposed scheme was characterized  for both discrete memoryless and  Gaussian channels. Such a new rate can be viewed as a general framework, where the existing interference assisted schemes such as noisy-forwarding  and cooperative jamming approaches can be shown as special cases of the proposed scheme. In addition, under some channel condition where  the existing schemes can only achieve zero secrecy rate, analytic and numerical results have been provided to show that the proposed secrecy scheme can still offer significant performance gains. The impact of power control on the secrecy rate has only been analyzed by relying on computer simulations, where a promising future direction is to carry out the study for the optimal design of power control and allocation for the addressed relay-eavesdropper scenario.

\appendices
\section{Proof of Theorem 1}
{\bf Codebook Generation:} Fix $p(x_1)p(x_2)p(\hat{y}_r|y_r,x_2)$. A codebook for each block is randomly and independently generated. Let $\mathbf{x}_{1j}=(x_{1,(j-1)n+1},\cdots x_{1,jn})$, $j\in[1:B]$. So are the definitions of $\mathbf{x}_{2j},$ $\mathbf{y}_{1j},$ $\mathbf{y}_{2j},$ $\mathbf{y}_{rj},$ $\hat{\mathbf{y}}_{rj}.$

For each $j\in[1:B]$, randomly generate $2^{nBR_1}$ $n$-sequences $\mathbf{x}_{1j}$, each according to the distribution $\prod_{i=1}^n p_{X_1}(x_{1,(j-1)n+i})$. These $2^{nBR_1}$ codewords are then randomly grouped into $2^{nBR_{1,s}}$ bins each with $2^{nBR_{1,o}}$ codewords\footnote{Note that the parameters $R_1$, $R_{1,o}$ correspond to $R_1^{(1)}(R_2)$ and $R_1^{(2)}(R_2)$ in \eqref{R_1(R_2)}, respectively}, $R_1=R_{1,s}+R_{1,o}$. Index them as $\mathbf{x}_{1j}(w_1,w'_1)$, $w_1\in[1:2^{nBR_{1,s}}]$, $w_1'\in[1:2^{nBR_{1,o}}].$ Then randomly generate $2^{nR_2}$ $n$-sequences $\mathbf{x}_{2j}(l_{j-1})$, $l_{j-1}\in[1:2^{nR_2}]$ ($l_0=1$), each according to $\prod_{i=1}^n p_{X_2}(x_{2,(j-1)n+i})$. For each $\mathbf{x}_{2j}(l_{j-1})$, randomly generate $2^{nR_2}$ $n$-sequences $\hat{\mathbf{y}}_{rj}(l_j|l_{j-1})$, $l_{j}\in[1:2^{nR_2}]$, each according to $\prod_{i=1}^n p_{\hat{Y}_r|X_2} (\hat{y}_{r,(j-1)n+i}|x_{2,(j-1)n+i}(l_{j-1}))$.

Hence the codebook is defined as
\begin{eqnarray}
  \mathcal{C}_j=\{\mathbf{x}_{1j}(w_1,w'_1),\mathbf{x}_{2j}(l_{j-1}),
  \hat{\mathbf{y}}_{rj}(l_j|l_{j-1}): \nonumber\\ w_1\in[1:2^{nBR_{1,s}}], w_1'\in[1:2^{nBR_{1,o}}], l_{j}\in[1:2^{nR_2}] \}\nonumber
\end{eqnarray}
for $j\in[1:B]$, ($l_0=1$). Besides, set the rates for $R_1$  and $R_2$ as
\begin{align}\label{R_2}
 & R_2>I_1+\delta(\epsilon'),\\
&R_1<\max\left\{ \min\left[I\left(X_1;\hat{Y}_r,Y_1|X_2\right), I_2^{(1)}-R_2 \right], I(X_1;Y_1) \right\} \nonumber
\end{align}
where $I_1$ and $I_2^{(2)}$ are defined in \eqref{defenition}, $\delta(\epsilon'\rightarrow 0)$ as $\epsilon'\rightarrow 0$.

{\bf Encoding:} The encoding processes at the source and relay are shown as follows:
\begin{itemize}
\item[$\bullet$] To send the confidential  $w_1\in[1:2^{nBR_{1,s}}]$, the source randomly chooses a dummy message $w_1'\in[1:2^{nBR_{1,o}}]$ and then sends $\mathbf{x}_{1j}(w_1,w'_1)$ at block $j$. Meanwhile, the relay transmits $\mathbf{x}_{2j}(l_{j-1})$ in block $j$.
\item[$\bullet$] The relay,
upon the received sequence $\mathbf{y}_{rj}$ at the end of block $j$,  finds an index $l_j$ such that $(\hat{\mathbf{y}}_{rj}(l_j|l_{j-1}),\mathbf{y}_{rj},\mathbf{x}_{2j}(l_{j-1})) \in \mathcal{T}_{\epsilon'}^{(n)}$. If there exists  more than one such {\em qualified} index, uniformly  select one of them at random. Besides, if there doesn't exist such an index, randomly choose an arbitrary index from $[1:2^{nR_2}]$.
\end{itemize}

{\bf Decoding:} Let $\epsilon>\epsilon'$.
 At the end of  block $B$, the destination declares that $\hat{w}_1$ is received if
 \begin{itemize}
 \item[1)] (``joint" decoding of NNC) $\mathbf{x}_{1j}(\hat{w}_1,\hat{w}'_1)$ is the only codeword such that $(\mathbf{x}_{1j}(\hat{w}_1,\hat{w}'_1),\mathbf{x}_{2j}(\hat{l}_{j-1}),
  \hat{\mathbf{y}}_{rj}(\hat{l}_j|\hat{l}_{j-1}), \mathbf{y}_{1j})\in \mathcal{T}_{\epsilon}^{(n)}$ for all $j\in[1:B]$ for some $\hat{l}_1,\cdots \hat{l}_B$; or
  \item[2)] (separate decoding) $\mathbf{x}_{1j}(\hat{w}_1,\hat{w}'_1)$ is the only codeword  such that $(\mathbf{x}_{1j}(\hat{w}_1,\hat{w}'_1), \mathbf{y}_{1j})\in \mathcal{T}_{\epsilon}^{(n)}$ for all $j\in[1:B]$.
 \end{itemize}
The destination makes an error if neither 1) nor 2) occurs, or if there exists more than one such $\hat{w}_1$-index.

For any rate pair  $(R_1,R_2)$ given in \eqref{R_2}, we have $R_1<\min\left[I\left(X_1;\hat{Y}_r,Y_1|X_2\right), I_2^{(1)}-R_2 \right]$ or $R_1<I(X_1;Y_1)$, which is subjected to the constraint in Section III of \cite{lim2011noisy} or the constraint of separate decoding.
Therefore, the intended receiver can decode $W_1$ with an arbitrarily small  probability of error using either the ``joint" decoding strategy according to NNC or the separate decoding way, as long as $n$ and $B$ are sufficiently large.

 \textbf{ Equivocation Computation:}
 The parameter $R_{1,o}$ is different for each of two cases (i.e. Case I and Case II) depending on the code rate $R_2$ at the relay. From \eqref{R_1(R_2)} and the perspective of the eavesdropper, in Case I, let\footnote{Note that if $I_1\geq  I(X_2;Y_2|X_1)+ I(\hat{Y}_r;X_1,Y_2|X_2)$, Case I does not exist, and we only need to consider Case II.} $I_1<R_2< I(X_2;Y_2|X_1)+ I(\hat{Y}_r;X_1,Y_2|X_2)$, which corresponds to the situation $R_1^{(2)}=\min\left[I\left(X_1;\hat{Y}_r,Y_2|X_2\right), I_2^{(2)}-R_2 \right]$; in Case II, let $R_2\geq I(X_2;Y_2|X_1)+ I(\hat{Y}_r;X_1,Y_2|X_2)$, which corresponds to the situation $R_1^{(2)}=I(X_1;Y_2)$.
  Now the equivocation will be lower bounded according to the $R_2$ value as two following  subsections.
 \subsection{Case I: $\left(I_1<R_2\leq I(X_2;Y_2|X_1)+ I(\hat{Y}_r;X_1,Y_2|X_2)\right)$}
  Set the rate parameter for $R_{1,o}$ as
 \begin{eqnarray}\label{R_1,l}
   R_{1,o}=\min \{I(X_1;\hat{Y}_r,Y_2|X_2), I_2^{(2)}-R_2 \}-\epsilon_1.
 \end{eqnarray} In the following analysis, $X_{1j}^n$ denotes $(X_{1,(j-1)n+1},\cdots,X_{1,jn})$, $j\in[1:B]$, then $X_1^{Bn}=(X_{1,1},\cdots,X_{1,Bn})=(X_{11}^n,\cdots,X_{1B}^n)$, and $X_{1,]j[}^n$ denotes the set $\{X_{1k}^n, 1\leq k<j \textrm{ or }j<k\leq n\}$. So are the definitions  with respect to the other variables such as $X_2,$ $Y_2,$ $\hat{Y}_r$.
  The equivocation at the eavesdropper is lower bounded as
 \begin{eqnarray}\label{equivocation}
   H(W_1|Y_2^{Bn}) &\geq & H(W_1|Y_2^{Bn},L_B) \nonumber\\
   &=& H(W_1,Y_2^{Bn}|L_B)-H(Y_2^{Bn}|L_B) \nonumber\\
   &=& H(W_1,X_1^{Bn},X_2^{Bn},\hat{Y}_r^{Bn},Y_2^{Bn}|L_B) \nonumber\\
   && -H(X_1^{Bn},X_2^{Bn},\hat{Y}_r^{Bn}|W_1,Y_2^{Bn},L_B)-H(Y_2^{Bn}|L_B) \nonumber\\
   &=& H(X_1^{Bn},X_2^{Bn},\hat{Y}_r^{Bn}|L_B)+
       H(W_1,Y_2^{Bn}|X_1^{Bn},X_2^{Bn},\hat{Y}_r^{Bn},L_B) \nonumber\\
   && -H(X_1^{Bn},X_2^{Bn},\hat{Y}_r^{Bn}|W_1,Y_2^{Bn},L_B)-H(Y_2^{Bn}|L_B) \nonumber\\
   &\geq& H(X_1^{Bn},X_2^{Bn},\hat{Y}_r^{Bn})-H(L_B)+
   H(Y_2^{Bn}|X_1^{Bn},X_2^{Bn},\hat{Y}_r^{Bn})\nonumber\\
   && -H(X_1^{Bn},X_2^{Bn},\hat{Y}_r^{Bn}|W_1,Y_2^{Bn},L_B)-H(Y_2^{Bn}) \nonumber\\
   &=& H(X_1^{Bn})-H(L_B)+H(X_2^{Bn},\hat{Y}_r^{Bn})-I(X_1^{Bn};X_2^{Bn},\hat{Y}_r^{Bn}) \nonumber\\
   &&-I(X_1^{Bn},X_2^{Bn},\hat{Y}_r^{Bn};Y_2^{Bn})-
   H(X_1^{Bn},X_2^{Bn},\hat{Y}_r^{Bn}|W_1,Y_2^{Bn},L_B)
 \end{eqnarray}
 Now let's calculate the six terms separately in the following   subsections.
 \subsubsection{The first and the second term}
 \begin{eqnarray}\label{term one}
   H(X_1^{Bn})&=&nBR_1=nB(R_{1,s}+R_{1,o})\nonumber\\
   H(L_B)&\leq& \log 2^{nR_2}=nR_2,
 \end{eqnarray}
 \subsubsection{The third term} Since a block Markov encoding is used, and $L_1-L_2-\cdots-L_B$ is a Markov chain. So it can be calculated that \begin{eqnarray}\label{H(L^B)}
   H(X_2^{Bn},\hat{Y}_r^{Bn})
    &=&H(L_1,L_2,\cdots L_B) \nonumber\\
   &=&\sum_{j=1}^{B} H(L_j|L_1,\cdots,L_{j-1})\nonumber\\
   &=&\sum_{j=1}^{B} H(L_j|L_{j-1}),
 \end{eqnarray}
In order to obtain the lower bound of $H(L_B|L_{B-1})$, let's first present a lemma as following.
\begin{Lemma}
For any $j\in[1:B]$ and $l_{j-1},l_j\in[1:2^{nR_2}]$, let $p(l_j|l_{j-1})$ denote $p_{L_j|L_{j-1}}(l_j|l_{j-1})$ for simplicity, the conditional  probability mass function $p(l_j|l_{j-1})$ can be upper bounded as
  \begin{eqnarray}
  p(l_j|l_{j-1})\leq 2^{-n(R_2-2\delta(\epsilon'))} \left(\frac{2}{1-\epsilon'}+\exp\left\{ -\left((1-\epsilon')2^{n(R_2-I_1-\delta(\epsilon'))-3}-n(R_2-2\delta(\epsilon'))\ln2 \right)\right\}\right).
\end{eqnarray}
\end{Lemma}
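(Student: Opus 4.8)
The plan is to compute $p(l_j|l_{j-1})$ directly from the relay's compression rule and then to control it with a concentration argument on the number of jointly typical compression codewords. First I would fix $l_{j-1}$ (which fixes $\mathbf{x}_{2j}(l_{j-1})$) and condition on the relay's received block $\mathbf{y}_{rj}$. By the encoding rule, writing $K$ for the number of indices $l\in[1:2^{nR_2}]$ with $(\hat{\mathbf{y}}_{rj}(l|l_{j-1}),\mathbf{y}_{rj},\mathbf{x}_{2j}(l_{j-1}))\in\mathcal{T}_{\epsilon'}^{(n)}$, the index $l_j$ is selected with probability $1/K$ when $l_j$ is itself qualified and $K\geq 1$, and with probability $2^{-nR_2}$ in the degenerate event $K=0$. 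Hence
\begin{eqnarray}
p(l_j|l_{j-1})=\mathbb{E}\!\left[\frac{\mathbf{1}\{l_j\text{ qualified}\}}{K}\right]+\Pr\{K=0\}\,2^{-nR_2},\nonumber
\end{eqnarray}
where the expectation is over $\mathbf{y}_{rj}$ and the randomly generated codewords $\{\hat{\mathbf{y}}_{rj}(l|l_{j-1})\}$.

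The key step is to decouple the qualification of $l_j$ from the total count. Since the sequences $\hat{\mathbf{y}}_{rj}(l|l_{j-1})$ are generated independently across $l$, I would write $K=1+K'$ on the event that $l_j$ is qualified, where $K'$ counts the \emph{other} qualified indices and is independent of whether $l_j$ is qualified (given $\mathbf{y}_{rj}$). The joint typicality lemma then supplies the two estimates I need: for a typical $\mathbf{y}_{rj}$, the probability that a single independently drawn codeword is jointly typical is at most $2^{-n(I_1-\delta(\epsilon'))}$, while summing over the remaining $2^{nR_2}-1$ codewords gives $\mu:=\mathbb{E}[K']\geq(1-\epsilon')2^{n(R_2-I_1-\delta(\epsilon'))}$, which is large because $R_2>I_1+\delta(\epsilon')$.

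To turn $\mathbb{E}[\mathbf{1}\{l_j\text{ qualified}\}/(1+K')]$ into the stated bound I would split on the event $\{K'\geq\mu/2\}$. On this event $1/(1+K')\leq 2/\mu$, so combining with the qualification probability and the lower bound on $\mu$ yields the main term
\begin{eqnarray}
2^{-n(I_1-\delta(\epsilon'))}\cdot\frac{2}{\mu}\leq\frac{2}{1-\epsilon'}\,2^{-n(R_2-2\delta(\epsilon'))}.\nonumber
\end{eqnarray}
On the complementary event $\{K'<\mu/2\}$ I would discard both $\mathbf{1}\{l_j\text{ qualified}\}$ and $1/(1+K')$ (bounding each by $1$) and invoke the multiplicative Chernoff bound $\Pr\{K'<\mu/2\}\leq\exp(-\mu/8)$; with $\mu\geq(1-\epsilon')2^{n(R_2-I_1-\delta(\epsilon'))}$ this is exactly the correction $\exp\{-(1-\epsilon')2^{n(R_2-I_1-\delta(\epsilon'))-3}\}$, the factor $2^{-3}=1/8$ arising from the standard $\delta=1/2$ Chernoff constant. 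The residual term $\Pr\{K=0\}\,2^{-nR_2}\leq 2^{-nR_2}$ is dominated by the main term. Finally I would factor $2^{-n(R_2-2\delta(\epsilon'))}$ out of the sum, which rewrites the Chernoff correction as $\exp\{-[(1-\epsilon')2^{n(R_2-I_1-\delta(\epsilon'))-3}-n(R_2-2\delta(\epsilon'))\ln 2]\}$ and reproduces the claimed inequality.

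I expect the main obstacle to be the decoupling in the second paragraph: correctly isolating the self-contribution of $l_j$ to the count $K$ so that the remaining $K'$ is genuinely independent of $l_j$'s qualification, and then making the $(1-\epsilon')$ prefactors and the $1/8$ Chernoff constant line up with the precise statement. A secondary technical point is the conditioning on the typicality of $\mathbf{y}_{rj}$, which must be handled so that the atypical part contributes only a negligible $\epsilon'$-scaled term while the lower bound on $\mu$ (hence the $(1-\epsilon')$ factor) remains valid.
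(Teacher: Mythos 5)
Your proposal is correct and follows essentially the same route as the paper's Appendix B: bound the qualification probability and the expected number of qualified indices via the joint typicality lemma, split on whether the count falls below half its mean, apply the multiplicative Chernoff bound to the small-count event, and use $1/t\leq 2/E(T)$ together with the single-codeword typicality probability for the main term. The only cosmetic difference is that you decouple the count as $1+K'$ (excluding $l_j$'s own indicator), whereas the paper works directly with the joint probability $P(Q_{l_j,l_{j-1}}=1,\ T_{l_{j-1}}=t)$ over the full count $T_{l_{j-1}}$; both yield the identical bound.
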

\begin{proof}
  refer to Appendix B.
\end{proof}
 Therefore, by using the above lemma, $H(L_j|L_{j-1})$ can be bounded as
\begin{eqnarray}
  H(L_j|L_{j-1})&=& \sum_{l_{j-1}=1}^{2^{nR_2}}\sum_{l_{j}=1}^{2^{nR_2}}
  p(l_j,l_{j-1})\log p^{-1}(l_j|l_{j-1})\nonumber\\
  &\geq& \left(\log 2^{n(R_2-2\delta(\epsilon'))}-n\delta_1(n)\right)
  \sum_{l_{j-1}=1}^{2^{nR_2}}\sum_{l_{j}=1}^{2^{nR_2}}p(l_j,l_{j-1})\nonumber\\
  &=&nR_2-n(2\delta(\epsilon')+\delta_1(n)),
\end{eqnarray}
where $\delta_1(n)=\frac{1}{n}\log\left(\frac{2}{1-\epsilon'}+\exp\left\{ -\left((1-\epsilon')2^{n(R_2-I_1-\delta(\epsilon'))-3}-n(R_2-2\delta(\epsilon'))\ln2 \right)\right\}\right) \rightarrow 0$ as $n\rightarrow \infty$ since $R_2$ has been constrained in \eqref{R_2}. Recalling \eqref{H(L^B)}, we can obtain \begin{equation}\label{term two2}
  H(X_2^{Bn},\hat{Y}_r^{Bn})\geq nBR_2-nB(2\delta(\epsilon')+\delta_1(n)).
\end{equation}
\subsubsection{The fourth and fifth terms} The fourth term in \eqref{H(L^B)} can be upper bounded as: \begin{align}
  I(X_1^{Bn};X_2^{Bn},\hat{Y}_r^{Bn}) &= \sum_{j=1}^{B} I(X_1^{Bn};X_{2j}^{n},\hat{Y}_{rj}^{n}|X_{2}^{(j-1)n},\hat{Y}_{r}^{(j-1)n})
  \nonumber\\
  &= \sum_{j=1}^{B} \{ I(X_1^{Bn};\hat{Y}_{rj}^{n}|X_{2j}^{n},X_{2}^{(j-1)n},\hat{Y}_{r}^{(j-1)n})
  +I(X_1^{Bn};X_{2j}^{n}|X_{2}^{(j-1)n},\hat{Y}_{r}^{(j-1)n})\}
  \label{0 term}\\
   &\stackrel{(a)}{\leq} \sum_{j=1}^{B} \{ H(\hat{Y}_{rj}^{n}|X_{2j}^n)- H(\hat{Y}_{rj}^{n}|X_1^{Bn},X_{2j}^{n},X_{2}^{(j-1)n},\hat{Y}_{r}^{(j-1)n})
   \} \nonumber\\
   &\stackrel{(b)}{=} \sum_{j=1}^{B} \{ H(\hat{Y}_{rj}^{n}|X_{2j}^n)- H(\hat{Y}_{rj}^{n}|X_{1j}^{n},X_{2j}^{n})\} \nonumber\\
   &=\sum_{j=1}^B I(X_{1j}^n;\hat{Y}_{rj}^n|X_{2j}^n), \nonumber
\end{align}
where (a) follows removing conditioning and the fact that the second term in
\eqref{0 term} is zero since $X_{2j}^n$ is a deterministic function of $\hat{Y}_{r,j-1}^n, X_{2,j-1}$; (b) follows the fact that $\hat{Y}_{rj}^n -\{X_{1j}^n,X_{2j}^n\}-\{\hat{Y}_{r}^{(j-1)n},X_{2}^{(j-1)n},X_{1,]j[}^n\}$ is a Markov chain, since the channel is memoryless and the codebook for each block is independently generated. Besides, the fifth  term can be upper bounded as
\begin{eqnarray}
  I(X_1^{Bn},X_2^{Bn},\hat{Y}_{r}^{Bn};Y_2^{Bn}) &=& \sum_{j=1}^{B}
    I(X_1^{Bn},X_2^{Bn},\hat{Y}_{r}^{Bn};Y_{2j}^{n}|Y_2^{(j-1)n})\nonumber\\
    &\leq& \sum_{j=1}^B \{H(Y_{2j}^n)-H(Y_{2j}^n|X_1^{Bn},X_2^{Bn}, \hat{Y}_{r}^{Bn},Y_2^{(j-1)n})\} \nonumber\\
    &\stackrel{(a)}{=}& \sum_{j=1}^B \{H(Y_{2j}^n)-H(Y_{2j}^n|X_{1j}^n, X_{2j}^n, \hat{Y}_{rj}^{n}\} \nonumber\\
    &=& \sum_{j=1}^{B}
    I(X_{1j}^{n},X_{2j}^{n},\hat{Y}_{rj}^{n};Y_{2j}^{n}), \nonumber
\end{eqnarray}
where (a) follows from the fact that $Y_{2j}^{n}-\{X_{1j}^{n},X_{2j}^{n}, \hat{Y}_{rj}^{n}\}-\{Y_{2}^{(j-1)n},X_{1]j[}^{n},X_{2]j[}^{n}, \hat{Y}_{r]j[}^{n}\}$ is a Markov chain. Using the above two results, we can bound the sum of above two terms as
\begin{eqnarray}\label{sum 4,5}
  S_{4,5}&=& I(X_1^{Bn};X_2^{Bn},\hat{Y}_r^{Bn})+I(X_1^{Bn},X_2^{Bn}, \hat{Y}_{r}^{Bn};Y_2^{Bn})\nonumber\\
  &\leq& \sum_{j=1}^B [I(X_{1j}^n;\hat{Y}_{rj}^n|X_{2j}^n)+I(X_{1j}^{n}, X_{2j}^{n},\hat{Y}_{rj}^{n};Y_{2j}^{n})] \nonumber\\
  &=& \sum_{j=1}^B [I(X_{1j}^{n}, X_{2j}^{n};Y_{2j}^{n})+I(\hat{Y}_{rj}^n;X_{1j}^n,Y_{2j}^n|X_{2j}^n)] \nonumber\\
  &\leq& \sum_{j=1}^B n[I(X_1,X_2;Y_2)+I(\hat{Y}_r;X_1,Y_2|X_2)+\delta_2(n)]. \nonumber
\end{eqnarray}
On the other hand,
\begin{eqnarray}
  S_{4,5}&\leq& \sum_{j=1}^B [I(X_{1j}^n;\hat{Y}_{rj}^n|X_{2j}^n)+I(X_{1j}^{n}, X_{2j}^{n},\hat{Y}_{rj}^{n};Y_{2j}^{n})] \nonumber\\
  &=& \sum_{j=1}^B [I(X_{1j}^n;\hat{Y}_{rj}^n,Y_{2j}^n|X_{2j}^n)+I( X_{2j}^{n},\hat{Y}_{rj}^{n};Y_{2j}^{n})] \nonumber\\
  &\leq& \sum_{j=1}^B [I(X_{1j}^n;\hat{Y}_{rj}^n,Y_{2j}^n|X_{2j}^n) +H(X_{2j}^n)] \nonumber\\
  &\leq& \sum_{j=1}^B n[I(X_1;\hat{Y}_r,Y_2|X_2)+R_2+\delta_2(n)]. \nonumber
\end{eqnarray}
Therefore, we have
\begin{eqnarray}\label{term three}
  S_{4,5}\leq nB(R_{1,o}+R_2)+nB(\epsilon_1+\delta_2(n)),
\end{eqnarray}
where $R_{1,o}$ is defined in \eqref{R_1,l}.
\subsubsection{The last term}
Now let's bound the last term $H(X_1^{Bn},X_2^{Bn},\hat{Y}_r^{Bn}|W_1,Y_2^{Bn},L_B)$. Fix $W_1=w_1$ and $L_B=l_b$, and assume that the source sends a codeword $\mathbf{x}_{1j}(w_1,w'_1)$ and the relay transmits $\mathbf{x}_{2j}(l_{j-1})$ at each block $j\in[1:B]$.  Since $X_1^{Bn}$ is determined by two indices $w_1$, $w'_1$ and $(X_2^{Bn},\hat{Y}_r^{Bn})$ are determined by $B$ indices $(l_1,\cdots,l_B)$,   the eavesdropper only needs to do   joint decoding of $w'_1$ and $(l_1,\cdots,l_{B-1})$ at the end of block $B$ with the side information that $W_1=w_1$ and  $L_B=l_b$. Specifically, for any $\epsilon>\epsilon'$, given $W_1=w_1$ and $L_B=l_B$, the eavesdropper finds the unique index set  $\{\hat{w}'_1,\hat{l}_1, \cdots,\hat{l}_{B-1}\}$ such that $(\mathbf{x}_{1j}({w}_1,\hat{w}'_1),\mathbf{x}_{2j}(\hat{l}_{j-1}),
  \hat{\mathbf{y}}_{rj}(\hat{l}_j|\hat{l}_{j-1}), \mathbf{y}_{2j})\in \mathcal{T}_{\epsilon}^{(n)}$ for all $j\in[1:B]$.  The eavesdropper make an error if there exists none or  more than one such index set.

 Analysis of the error probability: Given $W_1=w_1$ and $L_B=l_B$, assume without loss of generality that $W'_1=1$ and $L_1=\cdots=L_B=1$ are sent. Then the eavesdropper makes an error only if at least one of the following events occur:
\begin{eqnarray}
  \varepsilon_1&=& \{(\hat{\mathbf{Y}}_{rj}({l}_j|1), \mathbf{X}_{2j}(1), \mathbf{Y}_{rj}) \not\in \mathcal{T}_{\epsilon'}^{(n)} \textrm{ for all } l_j\in[1:2^{nR_2}] \textrm{ for some } j\in[1:B] \} \nonumber\\
  \varepsilon_2&=& \{(\mathbf{X}_{1j}(w_1,1), \mathbf{X}_{2j}(1),\hat{\mathbf{Y}}_{rj}(1|1),  \mathbf{Y}_{2j}) \not\in \mathcal{T}_{\epsilon}^{(n)} \textrm{ for some } j\in[1:B] \} \nonumber\\
  \varepsilon_3&=& \{(\mathbf{X}_{1j}(w_1,w'_1), \mathbf{X}_{2j}(l_{j-1}),\hat{\mathbf{Y}}_{rj}({l}_j|l_{j-1}),  \mathbf{Y}_{2j}) \in \mathcal{T}_{\epsilon}^{(n)} \textrm{ for all } j \textrm{ for some } l^{B-1}, w'_1\neq 1 \} \nonumber\\
  \varepsilon_4&=& \{(\mathbf{X}_{1j}(w_1,1), \mathbf{X}_{2j}(l_{j-1}),\hat{\mathbf{Y}}_{rj}({l}_j|l_{j-1}),  \mathbf{Y}_{2j}) \in \mathcal{T}_{\epsilon}^{(n)}
  \textrm{ for all } j \textrm{ for some } l^{B-1}\neq 1^{B-1} \} \nonumber
\end{eqnarray}
Thus the error probability can be bounded as
\begin{eqnarray}
  P(\varepsilon) &=& P(\varepsilon\cap \varepsilon_1) +P(\varepsilon \cap \varepsilon_1^c) \nonumber\\
  &=& P(\varepsilon\cap\varepsilon_1)+P\left((\varepsilon_2 \cup \varepsilon_3 \cup \varepsilon_4)\cap \varepsilon_1^c\right) \nonumber\\
  &=&P(\varepsilon\cap\varepsilon_1)+P\left((\varepsilon_2\cap \varepsilon_1^c) \cup (\varepsilon_3\cap \varepsilon_1^c) \cup (\varepsilon_4\cap \varepsilon_1^c)\right) \nonumber\\
  &\leq& P(\varepsilon_1) + P(\varepsilon_2\cap \varepsilon_1^c) +P(\varepsilon_3\cap \varepsilon_1^c) +P(\varepsilon_4\cap \varepsilon_1^c) \nonumber\\
  &\leq& P(\varepsilon_1)+ P(\varepsilon_2 {\small \cap } \varepsilon_1^c) + P(\varepsilon_3)+ P(\varepsilon_4).
\end{eqnarray}
From \cite{lim2011noisy}, one can see that the first three terms goes to 0 as $n\rightarrow \infty $ since we have constrained the rate pair $(R_2,R_{1,o})$ in \eqref{R_2} and \eqref{R_1,l}. For $ P(\varepsilon_4)$, define the events
\begin{eqnarray}
\tilde{\varepsilon}_j(1,l_{j-1},l_j)=\{(\mathbf{X}_{1j}(w_1,1), \mathbf{X}_{2j}(l_{j-1}),\hat{\mathbf{Y}}_{rj}({l}_j|l_{j-1}),  \mathbf{Y}_{2j}) \in \mathcal{T}_{\epsilon}^{(n)}\}, \nonumber \end{eqnarray}
and then
\begin{eqnarray}\label{p(e4)}
  P(\varepsilon_4) &=& P(\cup_{l^{B-1}\neq 1^{B-1}} \cap_{j=1}^B \tilde {\varepsilon}_j(1,l_{j-1},l_j)) \nonumber\\
  &\leq& \sum_{l^{B-1}\neq 1^{B-1}} P( \cap_{j=1}^B \tilde {\varepsilon}_j(1,l_{j-1},l_j)) \nonumber\\
  &\stackrel{(a)}{=}& \sum_{l^{B-1}\neq 1^{B-1}} \prod_{j=1}^B P(  \tilde {\varepsilon}_j(1,l_{j-1},l_j)) \nonumber\\
  &\leq& \sum_{l^{B-1}\neq 1^{B-1}} \prod_{j=2}^B P(  \tilde {\varepsilon}_j(1,l_{j-1},l_j)),
\end{eqnarray}
where $(a)$ is due to the fact that the codebook at each block $j\in[1:b]$ is independently generated and the  memoryless channel is considered.
Note that if $w'_1=1$ and $l_{j-1}\neq 1$,
$(\mathbf{X}_{2j}(l_{j-1}), \hat{\mathbf{Y}}_{rj}({l}_j|l_{j-1}))\sim \prod_{i=1}^n p_{X_2,\hat{Y}_r}(x_{2,(j-1)n+i},\hat{y}_{r,(j-1)n+i})
$ is independent of $(\mathbf{X}_{1j}(w_1,1),\mathbf{Y}_{2j})$. So by the joint typicality
lemma (\cite{gamal2010lecture}, Lecture Note 2), for $w'_1=1$ and $l_{j-1}\neq 1$,
\begin{eqnarray}
  P(\tilde {\varepsilon}_j(1,l_{j-1},l_j))&=& P\left((\mathbf{X}_{1j}(w_1,1), \mathbf{X}_{2j}(l_{j-1}),\hat{\mathbf{Y}}_{rj}({l}_j|l_{j-1}),  \mathbf{Y}_{2j}) \in \mathcal{T}_{\epsilon}^{(n)}\right) \nonumber\\
  &=&\sum_{(\mathbf{x}_{1j},\mathbf{x}_{2j},\hat{\mathbf{y}}_{rj},\mathbf{y}_{2j}) \in \mathcal{T}_{\epsilon}^{(n)} } p(\mathbf{x}_{2j},\hat{\mathbf{y}}_{rj}) p(\mathbf{x}_{1j},\mathbf{y}_{2j})\nonumber\\
  &\leq& |\mathcal{T}_{\epsilon}^{(n)}|2^{-n(H(X_2,\hat{Y}_r) -\epsilon H(X_2,\hat{Y}_r) )} 2^{-n(H(X_1,Y_2) -\epsilon H(X_1,Y_2))} \nonumber\\
  &\leq& 2^{-n(H(X_1,Y_2)+H(X_2,\hat{Y}_r) -H(X_1,X_2,\hat{Y}_r,Y_2) -\delta (\epsilon))} \nonumber\\
   &=& 2^{-n(I_3-\delta(\epsilon))},
\end{eqnarray}
where $I_3=I(X_2,\hat{Y}_r;X_1,Y_2)=I(X_2;Y_2|X_1)+I(\hat{Y}_r;X_1Y_2|X_2)$. If conditioned on that the binary sequence $l^{B-1}$ has $k(k\in[1:B-2])$ 1s, from the above result we have \begin{eqnarray}
 \prod_{j=2}^B P(  \tilde {\varepsilon}_j(1,l_{j-1},l_j))
 \leq 2^{-n(B-1-k)(I_3-\delta(\epsilon))}.
\end{eqnarray}
Hence substitute the above result   to \eqref{p(e4)} and obtain
\begin{eqnarray}
 P(\varepsilon_4) &\leq& \sum_{l^{B-1}\neq 1^{B-1}} \prod_{j=2}^B P(  \tilde {\varepsilon}_j(1,l_{j-1},l_j)) \nonumber\\
 &\leq& \sum_{k=0}^{B-2} \left( \begin{matrix}
 B-1\\ k \end{matrix} \right) 2^{n(B-1-k)R_2}2^{-n(B-1-k)(I_3-\delta(\epsilon))} \nonumber\\
 &=&\sum_{i=1}^{B-1} \left( \begin{matrix}
 B-1 \\ i \end{matrix} \right) 2^{-nia_0}, \nonumber
\end{eqnarray}
where $i=B-1-k$ and $a_0=I_3-R_2-\delta(\epsilon)$. Note that $a_0>0$ since Case I is considered here. Let  $u_i=\left( \begin{matrix}
 B-1 \\ k \end{matrix} \right) 2^{-nia_0}$, then $$ \frac{u_{i+1}}{u_i}=\frac{B-1-i}{i+1} 2^{-na_0}\leq (B/2-1)2^{-na_0}, $$ for $ 1\leq i\leq B-2$. Hence $u_i\leq u_1((B/2-1)2^{-na_0})^{i-1}$ for $1\leq i\leq B-1$, where $u_1=(B-1) 2^{-na_0}$. Therefore $$ P(\epsilon _4)\leq (B-1)2^{-na_0}\sum_{i=1}^{B-1} ((B/2-1)2^{-na_0})^{i-1}\leq \frac{(B-1)2^{-na_0}} {1-(B/2-1)2^{-na_0}}.$$
Therefore, the error probability $P(\epsilon)$ goes to 0 if $n$ is sufficiently large. From Fano's inequality, we have \begin{eqnarray}&&\frac{1}{nB}H(X_1^{Bn},X_2^{Bn},\hat{Y}_r^{Bn}|W_1=w_1,L_B=l_B, Y_2^{Bn})\nonumber\\
&& \leq \frac{1}{nB}\left( 1+P(\epsilon)\log (|\mathcal{W}'_1|\times |\mathcal{L}^{B-1}|) \right) \nonumber\\
&&\leq\frac{1}{nB}\left( 1+P(\epsilon)(\log 2^{nBR_{1,o}}+(B-1)\log 2^{nR_2})  \right) \nonumber\\
&& =\frac{1}{nB}+P(\epsilon) \left(R_{1,o}+\frac{B-1}{B}R_2\right)\nonumber\\
&&= \delta_3(n).
\end{eqnarray}
 Thus the last term can be bounded as \begin{eqnarray}\label{term four}
 &&\frac{1}{nB} H(X_1^{Bn},X_2^{Bn},\hat{Y}_r^{Bn}|W_1,Y_2^{Bn},L_B)\nonumber\\
  && =\sum_{w_1,l_B} p(w_1,l_B)H(X_1^{Bn},X_2^{Bn},\hat{Y}_r^{Bn}|W_1=w_1,L_B=l_B,Y_2^{Bn})
  \nonumber\\
  && \leq \delta_3(n).
\end{eqnarray}
By combining \eqref{equivocation} with \eqref{term one}, \eqref{term two2}, \eqref{term three} and \eqref{term four}, we get \begin{eqnarray}
\frac{1}{nB}H(W_1|Y_2^{Bn})\geq R_{1,s}-\frac{R_2}{B}-2\delta(\epsilon')-\delta_1(n)-\epsilon_1-\delta_2(n)- \delta_3(n).
\end{eqnarray}
By letting $B\rightarrow \infty$, the equivocation is approaching to the secrecy rate.
\subsection{Case II: $\left(R_2\geq I(X_2;Y_2|X_1)+ I(\hat{Y}_r;X_1,Y_2|X_2)\right)$}
This proof can be completed  by following the similar steps in Case I and \cite{tang2011interference}. First  the rate parameter for $R_{1,o}$ is chosen as \begin{eqnarray}\label{II-R_{1,o}}
  R_{1,o}= I(X_1;Y_2)-\epsilon_1.
\end{eqnarray}

To prove the rate  for CASE II, the relay uses a similar binning procedure as the the source. For each $j\in[1:B]$, the $2^{nR_2}$ codewords are randomly grouped into $2^{nR'_2}$ bins each with $2^{nR''_2}$ codewords, hence $R_2=R'_2+R''_2$. So that the index of each codeword can be equivalently expressed as $l_j=(l'_j,l''_j)$, where $l'_j\in[1:2^{nR'_2}],$ $l''_j\in[1:2^{nR''_2}]$. To simplify  the proof, we set
\begin{eqnarray}\label{II-R''_2}
  R''_2=I(X_2;Y_2|X_1)+ I(\hat{Y}_r;X_1,Y_2|X_2)-\epsilon_2.
\end{eqnarray}
Then following the steps in \eqref{equivocation} and let $(L')^{B-1} =\{ L'_1,\cdots, L'_{B-1}\}$, the equivocation can be bounded as
\begin{eqnarray}\label{II-equivocation}
   H(W_1|Y_2^{Bn}) &\geq & H(W_1|Y_2^{Bn},L_B,(L')^{B-1}) \\
   &\geq& H(X_1^{Bn})-H(L_B,(L')^{B-1})+H(X_2^{Bn},\hat{Y}_r^{Bn})
   -I(X_1^{Bn};X_2^{Bn},\hat{Y}_r^{Bn}) \nonumber\\
   &&-I(X_1^{Bn},X_2^{Bn},\hat{Y}_r^{Bn};Y_2^{Bn})-
   H(X_1^{Bn},X_2^{Bn},\hat{Y}_r^{Bn}|W_1,Y_2^{Bn},L_B,(L')^{B-1}), \nonumber
 \end{eqnarray}
 where $H(X_1^{Bn})=nB(R_{1,s}+R_{1,o})$, $H(X_2^{Bn},\hat{Y}_r^{Bn})\geq nBR_2-nB(2\delta(\epsilon')+\delta_1(n))$ as shown in \eqref{term two2}. The second term can be bounded as $$ H(L_B,(L')^{B-1})\leq H(L_B)+\sum_{j=1}^{B-1} H(L'_j)\leq nR_2+n(B-1)R'_2. $$ Besides, from \eqref{sum 4,5}, the sum of the fourth and the fifth terms is
 \begin{eqnarray}
  S_{4,5}&=& I(X_1^{Bn};X_2^{Bn},\hat{Y}_r^{Bn})+I(X_1^{Bn},X_2^{Bn}, \hat{Y}_{r}^{Bn};Y_2^{Bn})\nonumber\\
  &\leq&  nB[I(X_1,X_2;Y_2)+I(\hat{Y}_r;X_1,Y_2|X_2)+\delta_2(n)]. \nonumber\\
  &=&  nB[R_{1,o}+R''_2+\epsilon_1+\epsilon_2+\delta_2(n)].
\end{eqnarray}

To bound the last term $H(X_1^{Bn},X_2^{Bn},\hat{Y}_r^{Bn}|W_1,Y_2^{Bn},L_B,(L')^{B-1})$, the eavesdropper only needs to do   joint decoding of $W'_1$ and $\{L''_1,\cdots,L''_B\}$ at the end of block $B$ assuming that $W_1$,  $L_B$ and $(L')^{B-1}$ are given to it as side information. Similar to the analysis of the last term in Case I, for the rates $(R_2,R_{1,o}, R''_2)$ constrained in \eqref{R_2}, \eqref{II-R_{1,o}} and \eqref{II-R''_2}, it can be shown that the error probability is arbitrarily small for sufficiently large $n$. Hence we have
 \begin{eqnarray}
\frac{1}{nB} H(X_1^{Bn},X_2^{Bn},\hat{Y}_r^{Bn}|W_1,Y_2^{Bn},L_B,(L')^{B-1})
\leq \delta_3(n). \nonumber
\end{eqnarray}

Substituting the above results into \eqref{II-equivocation}, the equivocation can be bounded as
\begin{eqnarray}
  \frac{1}{nB}H(W_1|Y_2^{Bn})\geq R_{1,s}-\frac{R''_2}{B}-2\delta(\epsilon')-\delta_1(n)-\epsilon_1-\epsilon
  _2-\delta_2(n)- \delta_3(n).
\end{eqnarray}
Again by letting $B\rightarrow \infty $, we can show that the equivocation is approaching the secrecy rate.

\section{Proof of Lemma 1}
To upper bound the conditional probability mass function $p(l_j|l_{j-1})$ for any $j\in[1:B]$ and $l_{j-1},l_j \in [1:2^{nR_2}]$, we first make  some useful definitions as follows.
\begin{itemize}
  \item[$\bullet$] Given $L_{j-1}=l_{j-1}$, define $2^{nR_2}$ binary random variables as
  \begin{eqnarray}
    Q_{k,l_{j-1}}=\left\{ \begin{array}{ll}
      1, & \textrm{ if } \{(\hat{\mathbf{Y}}_{rj}(k|l_{j-1}),\mathbf{Y}_{rj},\mathbf{X}_{2j}(l_{j-1}))\in
  \mathcal{T}_{\epsilon'}^{(n)}|L_{j-1}=l_{j-1}\}\\
  0, & \textrm{ if } \{(\hat{\mathbf{Y}}_{rj}(k|l_{j-1}),\mathbf{Y}_{rj},\mathbf{X}_{2j}(l_{j-1}))\not\in
  \mathcal{T}_{\epsilon'}^{(n)}|L_{j-1}=l_{j-1}\}
    \end{array} \right., \nonumber
  \end{eqnarray}
where $k\in[1:2^{nR_2}]$. Define a new random variable as $T_{l_{j-1}}=\sum_{k=1}^{2^{nR_2}} Q_{k,l_{j-1}}$, which represents that there are $T_{l_{j-1}}$ qualified indexes in the $2^{nR_2}$ $\hat{\mathbf{y}}_{rj}$-codewords.
From the joint typicality (\cite{gamal2010lecture}, Lecture Note 2), for sufficiently large $n$ the probability $P(Q_{k,l_{j-1}}=1)$ can be bounded as:
\begin{eqnarray}\label{P(Q_k)}
  (1-\epsilon')2^{-n(I_1+\delta(\epsilon'))}\leq P(Q_{k,l_{j-1}}=1) \leq 2^{-n(I_1-\delta(\epsilon'))},
\end{eqnarray}
where $I_1=I(\hat{Y}_r;Y_r|X_2).$ The expectation of $T_{l_{j-1}}$ can be expressed as $E(T_{l_{j-1}})=\sum_{k=1}^{2^{nR_2}} P(Q_{k,l_{j-1}}=1)$, so it can be bounded as
\begin{eqnarray}\label{E(Q)}
  (1-\epsilon')2^{n(R_2-I_1-\delta(\epsilon'))}\leq E(T_{l_{j-1}}) \leq 2^{n(R_2-I_1+\delta(\epsilon'))}.
\end{eqnarray}
\item[$\bullet$] Let  $L_{j,l_{j-1}}\sim p_{L_j|L_{j-1}}(l_j|l_{j-1})$, {\em i.e.} $L_{j,l_{j-1}}\sim p(l_j|l_{j-1})$.
Similar to the definitions in \cite{gamal2010lecture} (Lecture Note 4), we can identify the random variables  $L_j:= L_{j,L_{j-1}}$, $Q_k:= Q_{k,L_{j-1}}$ and $T:= T_{L_{j-1}}$, whose distributions depend on $L_{j-1}$ in the same way as the distributions of $L_{j,l_{j-1}}$, $Q_{k,l_{j-1}}$ and $T_{l_{j-1}}$ depend on $l_{j-1}$.  From the encoding process, given $L_{j-1}=l_{j-1}$, conditioned on that there are $t(t>0)$ qualified indexes and $l_{j}$ is one of them ({\em i.e.} $Q_{l_j,l_{j-1}}=1$), obviously we have   \begin{eqnarray}\label{P(u|)}
      P\{ {L}_{j,l_{j-1}}=l_j|Q_{l_j,l_{j-1}}=1,T_{l_{j-1}}=t\}=
      \frac{1}{t}.
    \end{eqnarray}
 Similarly, given $L_{j-1}=l_{j-1}$, we can obtain another conditional probability as \begin{eqnarray}\label{P(u|2)}
  P\{ {L}_{j,l_{j-1}}=l_j|Q_{l_j,l_{j-1}}=0,T_{l_{j-1}}=t\}=0, \textrm{ for } t>0.
\end{eqnarray}
\end{itemize}
Let  $M=\frac{1}{2} E(T_{l_{j-1}})$ which is assumed to be an integer without loss of generality, then the probability $p(l_j|l_{j-1})$ can be calculated as
\begin{align}\label{p(u)}
  p(l_j|l_{j-1})&=P(L_{j,{L_{j-1}}}=l_j|{L_{j-1}}={l_{j-1}})=P( {L}_{j,{l_{j-1}}}=l_j)\nonumber\\
  &=\sum_{t=0}^{2^{nR_2}} P(\{ {L}_{j,{l_{j-1}}}=l_j\} \cap \{T_{l_{j-1}}=t\}) \nonumber\\ &= \sum_{t=0}^{M-1} P(\{ {L}_{j,{l_{j-1}}}=l_j\} \cap \{T_{l_{j-1}}=t\}) +\sum_{t=M}^{2^{nR_2}} P(\{ {L}_{j,{l_{j-1}}}=l_j\} \cap \{T_{l_{j-1}}=t\}).
  \end{align}
  The first term in the above equation can be bounded as
\begin{align}
\sum_{t=0}^{M-1} P(\{ {L}_{j,{l_{j-1}}}=l_j\} & \cap \{T_{l_{j-1}}=t\}) \leq \sum_{t=0}^{M-1} P(T_{l_{j-1}}=t) \nonumber\\&= P(T_{l_{j-1}}<M)\nonumber\\
&\stackrel{(a)}{\leq} \exp \left\{ -\frac{E(T_{l_{j-1}})}{8} \right\} \nonumber\\ &\leq \exp \{ - (1-\epsilon')2^{n(R_2-I_1-\delta(\epsilon'))-3} \},
  \end{align}
  where $(a)$ is based on the {\em multiplicative form of Chernoff bound} ( Eq. (7) in \cite{hagerup1990guided}) by setting the \emph{relative} error as $\frac{1}{2}$. Besides, the second term can be upper bounded as
  \begin{align}
  \sum_{t=M}^{2^{nR_2}} P(\{ {L}_{j,{l_{j-1}}}=l_j\} &\cap \{T_{l_{j-1}}=t\})=\sum_{t=M}^{2^{nR_2}} \sum_{i=0}^1 P(\{ {L}_{j,{l_{j-1}}}=l_j\} \cap \{T_{l_{j-1}}=t\}\cap \{Q_{l_j,{l_{j-1}}}=i\}) \nonumber\\
  &=\sum_{t=M}^{2^{nR_2}} \sum_{i=0}^1  P(Q_{l_j,{l_{j-1}}}=i,T_{l_{j-1}}=t) P( {L}_{j,{l_{j-1}}}=l_j | Q_{l_j,{l_{j-1}}}=i,T_{l_{j-1}}=t)  \nonumber\\
  &\stackrel{(a)}{=} \sum_{t=M}^{2^{nR_2}} \frac{1}{t} P(Q_{l_j,{l_{j-1}}}=1,T_{l_{j-1}}=t)
  \nonumber\\ &\leq \frac{2}{E(T_{l_{j-1}})}P(Q_{l_j,{l_{j-1}}}=1) \nonumber\\
  &\stackrel{(b)}{\leq} \frac{1}{(1-\epsilon')}2^{-n(R_2-2\delta(\epsilon'))+1},
\end{align}
where $(a)$ is based on  \eqref{P(u|)} and \eqref{P(u|2)}, $(b)$ is based on  \eqref{P(Q_k)} and \eqref{E(Q)}. Substituting  the above two results into \eqref{p(u)} concludes the proof of Lemma 1.

\section{Proof of Theorem 2}
Based on the  coding scheme  in Appendix A, $\mathcal{C}_1$ and $\mathcal{C}_2$ are generated by letting the distributions $\pi$ as $X_1\sim \mathcal{N}(0,P_1)$,  $X_2\sim \mathcal{N}(0,P_2)$, $\hat{Y}_r=Y_r+Z_C$ where $Z_C\sim \mathcal{N}(0,\delta_C)$ and $Z_C$ is independent of any other variable. Assuming  the fixed power pair ($P_1,P_2$) is used, here we only discuss the coding parameters $R_2$ and $\delta_C$. From \eqref{R_1(R_2)}, $R_1^{(1)}(R_2,\delta_C)$ and $R_1^{(2)}(R_2,\delta_C)$ can be calculated as
\begin{align}
 R_1^{(1)}(R_2,\delta_C)=\max\left\{\left[C\left(P_1+\frac{cP_1}{1
 +\delta_C}\right), C(P_1+bP_2)+C\left( \frac{cP_1}{1+\delta_C}\right)-R_2\right],C\left( \frac{P_1}{1+bP_2}\right) \right\} \nonumber\\
 R_1^{(2)}(R_2,\delta_C)=\max\left\{\left[C\left(aP_1+\frac{cP_1}{1
 +\delta_C}\right), C(aP_1+P_2)+C\left( \frac{cP_1}{1+\delta_C}\right)-R_2\right],C\left( \frac{aP_1}{1+P_2}\right) \right\}. \nonumber
\end{align}

 Next, we can observe that one achievable choice of  $R_S^I$ is to set $\delta_C=\delta_C^*$ with $\delta_C^* =\frac{1+(1+c)P_1}{bP_2}$  and $R_2=R_2^*$ with $R_2^*=\max\left\{ C\left( \frac{1+cP_1}{\delta_C^*}\right),C(P_2)+C\left( \frac{cP_1}{1+\delta_C^*}\right) \right\}$. In this case, $$R_1^{(1)}(R_2^*,\delta_C^*)=\max\left\{ C(P_1+bP_2)+C\left( \frac{cP_1}{1+\delta_C^*}\right)-R_2^*,C\left( \frac{P_1}{1+bP_2}\right) \right\}$$ and $R_1^{(2)}(R_2^*,\delta_C^*)= C\left( \frac{aP_1}{1+P_2}\right)$ can be  obtained.
 \subsubsection{When $b\geq 1+(1+c)P_1$} $\delta_C^*\leq \frac{1}{P_2}$, we have $R_2^*=C\left( \frac{1+cP_1}{\delta_C^*}\right)$. Then we calculate   $R_1^{(1)}=C\left(P_1+\frac{cP_1}{1+\delta_C^*}\right)$. And the secrecy rate $R_s=R_1^{(1)}-R_1^{(2)}=C\left(P_1+\frac{bcP_1P_2}{1+(1+c)P_1+bP_2}\right)- C\left(\frac{aP_1}{1+P_2}\right)$ can be obtained.
 \subsubsection{When $1\leq b< 1+(1+c)P_1$} $\frac{1}{P_2} \leq \delta_C^*<\frac{1+(1+c)P_1}{P_2}$, we have $R_2^*=C(P_2)+C\left(\frac{cP_1}{1+\delta_C^*}\right)$. Then $R_1^{(1)}$ can be calculated as $R_1^{(1)}=C(P_1+bP_2)-C(P_2)$. So the secrecy rate is $R_s= R_1^{(1)}-R_1^{(2)}=C(P_1+bP_2)-C(aP_1+P_2)$.
 \subsubsection{When $b\leq 1$} $\delta_C^*\geq\frac{1+(1+c)P_1}{P_2}$, we have $R_2^*=C(P_2)+C\left(\frac{cP_1}{1+\delta_C^*}\right)$. Then $R_1^{(1)}$ can be calculated as $R_1^{(1)}=C\left(\frac{P_1}{1+bP_2}\right)$. So the secrecy rate is $R_s=C\left(\frac{P_1}{1+bP_2}\right)-C\left( \frac{aP_1}{1+P_2}\right)$.

 On the other hand, $R_s^{II}=[C(P_1)-C(aP_1)]^+$ can be achieved by setting $\delta_C=\infty$ and $R_2=0$, which may yield a higher secrecy rate under certain conditions. Hence $\max\{R_s^I,R_s^{II}\}$ can be achieved and the proof of Theorem 2 is finished.

\bibliographystyle{IEEEtran}
\bibliography{references}

\end{document}